\title{Two Cases of Deduction with Non-referring Descriptions}
\author{Ji\v{r}\'{i} Raclavsk\'{y} 
\institute{Masaryk University\\ Brno, Czech Republic}
%\institute{Department of Philosophy}
%\\ Faculty of Arts} %\thanks{}
\email{raclavsky@phil.muni.cz}
%\and
}
\newcommand{\rotatediota}{{\mathpalette\rotiota\relax}}
\newcommand{\rotiota}[2]{\rotatebox[origin=c]{180}{$#1\iota$}}
\DeclareMathAlphabet{\mathsfit}{T1}{\sfdefault}{\mddefault}{\sldefault}
\SetMathAlphabet{\mathsfit}{bold}{T1}{\sfdefault}{\bfdefault}{\sldefault}
\newcommand{\acq}[1]{\ulcorner{#1}\urcorner}
\newcommand{\acqb}[1]{\textsf{\textbf{#1}}}
\newcommand{\acqs}[1]{\bm{\pmb{#1}}}
\newcommand{\cc}{\mathsfit}
\newcommand{\obj}[1]{\mathtt{#1}}
\theoremstyle{definition}
\newtheorem{theorem}{Theorem}
\newtheorem{definition}{Definition}
\newtheorem{lemma}{Lemma}
\begin{document}
\maketitle

\begin{abstract}
Formal reasoning with non-denoting terms, esp. non-referring descriptions such as ``the King of France'', is still an under-investigated area. The recent exception being a series of papers e.g. by Indrzejczak and Zawidzki. The present paper offers an alternative to their approach since instead of free logic and sequent calculus, it's framed in partial type theory with natural deduction in sequent style. Using a Montague- and Tich\'{y}-style formalization of  natural language, the paper successfully handles deduction with intensional transitives whose complements are non-referring descriptions, and derives Strawsonian rules for existential presuppositions of sentences with such descriptions.
\end{abstract}

%SUBMITTED ABSTRACT: Formal reasoning with non-denoting terms, esp. non-referring descriptions such as ``the King of France'', is still an under-investigated area. The recent exception being a series of papers e.g. by Indrzejczak and Zawidzki. The present paper offers an alternative to their approach since instead of free logic and sequent calculus, it's framed in partial type theory with natural deduction in sequent style. Using a Montague- and Tich\'{y}-style formalization of  natural language, the paper successfully handles deduction with intensional transitives whose complements are non-referring descriptions, and derives Strawsonian rules for existential presuppositions of sentences with such descriptions.
%SUBMITTED KEYWORDS
%non-referring descriptions
%non-denoting expressions
%partial type theory
%natural deduction in sequent style

\section{Introduction}

In his groundbreaking 1905 paper ``\textrm{On Denoting}'', Russell \cite{russell1905} offered a widely adopted theory of \textit{(definite) descriptions}, i.e. the \textit{singular terms} of the form ``\textsl{the $F$}'', the most famous example being ``\textsl{the King of France}''. 
Russell rightly indicated that

\begin{enumerate}
\small
\item[1.]
Each (definite) description is satisfied by at most one entity. 
\hfill (\textit{Uniqueness})
\item[2.]
Descriptions typically involve predicative (some say: descriptive) content. 
\hfill (\textit{Predicativity})
\end{enumerate}

\noindent
Which has been generally accepted, cf. e.g. Ludlow \cite{ludlow2023}. 
But the true brilliance of Russell's theory lies in its capability to 
handle even the fact that

\begin{enumerate}
\small
\item[3.]
Some descriptions (e.g. ``\textsl{the King of France}'')
are \textit{non-referring}.
\hfill
(\textit{Non-Referring Descriptions})
\end{enumerate}

However,
Russell's own elaboration of \textit{formal semantics} of descriptions became divisive. On one side, many theoreticians praised Russell for paradigmatic philosophical analysis -- which states that

\begin{enumerate}
\small
\item[(r1)]
Descriptions have no meaning in isolation, so ``\textsl{the King of France}'' is {meaningless} \textit{per se}.
\item[(r2)]
Descriptions only contribute to sentence's meaning by scattered bits such as the meaning of ``$F$''.
\item[(r3)]
The sentential meaning of e.g. ``\textsl{The King of France is bald}'' is to be reconstructed in terms of first-order logic with identity as an \textit{existential statement} of the form $\exists x ( F(x) \land G(x) \land \forall y ( F(y) \to y=x) )$. 
\end{enumerate}

\noindent
While (r2) has rarely been challenged since it obviously matches Point 2, 
(r3)'s consequence that 
sentences with descriptions in `referential positions' 
 (cf. e.g. ``\textsl{The King of France is an $F$.}'')
are definitely true or false (which was seen as an advantage by Russell and some his allies) has been persistently criticized by Strawson \cite{strawson1950} and his numerous supporters. 

But the clash between Strawson and Russell as regards (r3) overshadows the fact that both Russell and Strawson were followed by many writers (e.g. Tich\'{y} \cite{tichy1986}, Farmer \cite{farmer1990}, Feferman \cite{feferman1995}, Indrzejczak and Zawidzki \cite{indrzejczak-zawidzki2023}) who \textit{did} adopt Point 3 (neglecting here Strawson's stress on \textit{use} of descriptions).
The corresponding area of research is now known as the \textit{logic of non-denoting terms}, or, more generally, 
as \textit{partial logic}.
For an introduction, see e.g. Farmer \cite{farmer1990},  Feferman \cite{feferman1995}, or the present author's \cite{kuchynka-raclavsky2024}. 

\textit{Non-denoting terms} are in fact ubiquitous in 

\begin{enumerate}
\small
\item[a.]
(formalized) mathematics, cf. e.g. ``$3\div 0 $'', ``$\sqrt{x}$'' (for negative $x$), ``$\underset{x\mapsto a}{\mathrm{lim}}$ $f(x)$'' (for some values);
\item[b.]
natural language, cf. e.g. ``\textsl{the greatest prime}'', ``\textsl{the King of France}'';
\item[c.]
computer science, cf. e.g. abortive halting programs, unsuccessful database searches, etc.
\end{enumerate}

\noindent
Yet in (philosophical) logic such \textit{partiality phenomena} have been largely abandoned. In particular, many logical textbooks and related writings offer no sufficient discussion of descriptions and simply reiterate  Russell's controversial points (r1) and (r3). 
But once we overview further literature, we find various broadly Strawsonian approaches; they roughly fit the following quadruple of views:

\begin{enumerate}
\setlength\itemsep{.5em}
\small
\item[(s1)]
Descriptions $D$ do have a self-sustaining meaning: either (s1.a) $D$'s meaning is identical with $D$'s reference/denotation, or 
(s1.b) $D$'s meaning determines $D$'s reference/denotation.
\item[(s2)]
Sentences with descriptions $D$ in `referential position' are either (s2.a) implicitly existential claims, or (s2.b) are in no sense existential claims.
\end{enumerate}

\textit{Free logic} ($\mathsf{FL}$) seems to provide the largest platform for positions revolving mainly on (s2)-topics, cf. e.g. Bencivenga \cite{bencivenga2002}. As repeatedly argued by its proponents, 
$\mathsf{FL}$ delivers desired truth conditions for sentences with descriptions and other singular terms in `referential position' regardless their actual reference. Some writers, e.g. Farmer \cite{farmer1990}, Fitting and Mendelssohn \cite{fitting-mendelsohn1998}, follow Frege \cite{frege1892} and Scott \cite{scott1979} and maintain that $D$ refers to a \textit{dummy value} (sometimes denoted $\bot_\tau$ or $*_\tau$), an artificially chosen object either from the `domain we live in' (sometimes identified with \textit{inner domain}), or some \textit{outer domain}. 
Some writers at least briefly discuss so induced \textit{existential commitments} (i.e. s2.a), but many (e.g. Blamey \cite{blamey1986}) consider dummy values being mere technical devices. 
On the other hand, some theoreticians, e.g. Lehmann \cite{lehmann2002}, Tich\'{y} \cite{tichy1988} and also the present writer, rather favour the view that

\begin{enumerate}
\small
\item[4.]
Non-referring descriptions refer to nothing whatsoever (i.e. not to dummy entities). 
\hfill
(\textit{Genuine Partiality})
\end{enumerate}

\noindent
Whereas Occam's Principle of Parsimony provides a potent argument in favour of such a position.

Another assumption of the present paper, which is now widely adopted in literature, is an overt dismissal of Russell's (r1):

\begin{enumerate}
\small
\item[5.]
Descriptions have meaning even in isolation.
\hfill 
(\textit{Descriptions' Meaning})
\end{enumerate}

\noindent
As argued on numerous places in literature, in particular by 
Tich\'{y} \cite{tichy1971,tichy1988}, Montague \cite{montague1973}, Fitting and Mendelsohn \cite{fitting-mendelsohn1998}, Indrzejczak and Zawidzki \cite{indrzejczak2020,indrzejczak-zawidzki2023}, Orlandelli \cite{orlandelli2021}, and even the present author \cite{raclavsky2020},

\begin{enumerate}
\small
\item[6.]
The reference of `empirical' descriptions such as ``\textsl{the King of France}'' is a contingent affair, i.e. the reference of expressions depends on possible worlds and time instants. 
\qquad\qquad\qquad\;
(\textit{Modality, Temporality})
\end{enumerate}

\noindent
Moreover, the present paper relies on arguments developed by 
Tich\'{y} (e.g. \cite{tichy1988}, Moschovakis \cite{moschovakis2006} and others  (incl. the present author's \cite{raclavsky2020,kuchynka-raclavsky2024}) %according to which
in favour of the view that

\begin{enumerate}
\small
\item[7.]
Meanings of descriptions are algorithmic computations that determine possible-worlds intensions.
\\
\medskip
\hfill 
(\textit{Algorithmic Meanings})
\end{enumerate}

\noindent
Note that Point 7 sustains the \textit{Principle of Compositionality}: the meaning of a compound expression $E$ depends on the meaning of $E$'s parts -- regardless their contingent reference (if any).

\subsection{Problems addressed in the present paper}

So far we have sketched an overall background of our investigation; now it's time for a brief and informal discussion of problems addressed in this paper, indicating also their solution elaborated below.

\textit{Problem 1}. In his \cite{church1951}, 
Church published a decisive counter-argument against Russell's theory of descriptions. It employs so-called \textit{intensional transitive verbs} (ITVs) such as ``\textsl{seek}'', cf. e.g. the sentence

\begin{itemize}
\item[]
``\textsl{Ponce de León searched for the Fountain of Youth}''.
\end{itemize}

\noindent
As correctly observed by Church, and emphasised by Quine in his seminal paper \cite{quine1956}, such sentences \textit{lack existential commitment} as regards complements of ITVs. The sought object need not to exist, so we are not allowed to derive that (say) the Fountain of Youth exists.
Yet such a fallacious inference is not prevented by Russell's theory (since no discrimination between primary/secondary occurrence of a description can be employed here as in case of propositional attitudes). Which thus presents its fatal flaw. 

Church \cite{church1951} noted that Frege's theory of singular terms is therefore superior to Russell's, since it can reject undesired inferences by pointing out the confusion of reference (\textit{Bedeutung}) and sense (\textit{Sinn}). The distinction was elaborated by Carnap \cite{carnap1947} and other adherents of \textit{possible-worlds semantics} ($\textsl{PWS}$) in terms of extensions and possible-worlds \textit{intensions} (i.e. certain functions to extensions). Intensions such as the \textit{individual concept} of the Fountain of Youth figure as complement objects of the relations(-in-intensions) which are meanings of ITVs, cf. Tich\'y  \cite{tichy1971,tichy1988}, Montague \cite{montague1973}, or e.g. \cite{raclavsky2020}. 

The widely adopted 
solution, and even the problem itself, is surprisingly entirely missing in recent studies on reasoning with descriptions (cf. e.g. \cite{fitting-mendelsohn1998,indrzejczak-zawidzki2023}). One of the aims of the present paper is to suggest (on a particular example of a chosen deduction system) that any logical framework adopting $\textsl{PWS}$ can successfully cope with Problem 1. Of course, a $\mathsf{FL}$ restricted to first-order quantification is not useful here, since adoption of PWS-intensions typically amounts to adoption of \textit{quantification over functions} and so  \textit{higher-order logic} $\mathsf{HOL}$ -- e.g. the \textit{type theory} $\mathsf{TT^*}$ \cite{raclavsky2020,raclavsky2022,kuchynka-raclavsky2024} utilised below.

\textit{Problem 2.}
For investigation of Problem 1, the logical system $\mathsf{TT^*}$ deployed below might be perhaps seen as over-dimensioned. But it's deduction system 
-- a natural deduction in sequent style $\mathsf{ND}_\mathsf{TT^*}$, \cite{raclavsky2020,raclavsky2022,kuchynka-raclavsky2024,tichy1982} -- 
 is a \textit{labelled calculus}, for which esp. Gabbay \cite{gabbay1996} provided an extensive argumentation. 
In particular, a part of the present paper shows how labelled (or `signed') formulas allow to control inference even in cases the formulas being non-denoting expressions. (To avoid misunderstanding: according to the present approach all well-formed expressions always have certain meaning, viz. an algorithmic computation, yet they may lack a reference/denotation.)

Being so equipped, a formal reconstruction of Strawson's \cite{strawson1950} `logic' of \textit{existential presupposition} is possible. We will, for example, derive  an exact logical rule of $\mathsf{ND_{TT^*}}$ that corresponds to Strawson's claim (p. 330) that

\begin{itemize}
\item[]
If the sentence ``\textsl{The King of France doesn't exist}'' is false, then the sentence ``\textsl{The King of France is (not) bald}'' is without a truth value.
\end{itemize}

\noindent
Albeit such Strawsonian reasoning is considered sound by many linguists and some philosophers of language, its formal reconstruction seems to be entirely missing in logical literature.

\textit{Structure of the paper}.
In Secs. 2 and 3, we expose the \textit{partial type theory} $\mathsf{TT^*}$ and a natural deduction system for it, $\mathsf{ND_{TT^*}}$.
In Sec. 4, we first show how to formalize meanings of descriptions and expressions involving them, and how to formally check natural language arguments. We test the proposal against two groups of frequently neglected inferences, namely (a) inferences with intensional transitives (well understood in formal semantics), (b) Strawsonian inferences (rarely reflected in formal logic). 
Note: though the paper utilises many Tich\'{y}'s ideas, it also employs numerous ideas developed by the present author, some of them being alien or even contradictory to Tich\'{y}'s.

\section{Partial type theory $\mathsf{TT^*}$}

We adopt here Tich\'{y}'s \cite{tichy1982,tichy1988} (see also Moschovakis \cite{moschovakis2006}) idea that expressions of language \textit{express} (or: depict) abstract, structured, not necessarily effective, acyclic \textit{algorithmic computations}, called by Tich\'{y} \textit{constructions}. 
In our construal \cite{raclavsky2020,kuchynka-raclavsky2024},
constructions construct objects -- each from a particular \textit{domain} 
 $\mathscr{D}_{\tau^n}$ that interprets the \textit{type} $\tau^n$ (see below)
 -- that are different from them. 
For an illustrative example, ``$3 \times 1$'' and ``$5 - 2$'' express two different (but congruent) constructions, namely $\acqs{\times} (\acqb 3 ,\acqb 1)$ and $\acqs{-}(\acqb 5,\acqb 2)$, of the number $\obj{3}$.
Constructions may aptly serve as \textit{fine-grained meanings} of expressions, while the objects constructed by them serve as their \textit{denotata} (the double-layered semantics is \textit{neo-Fregean} in its spirit):

$$expressions/terms 
\underset{express}\longrightarrow
constructions
\underset{construct}\longrightarrow
objects \, (denotata)
$$
\noindent

Constructing is dependent on \textit{assignment} $v$ and \textit{model} $\mathscr{M}$ (see 
below), 
so constructions are said to $v$-\textit{construct} objects \textit{in} $\mathscr{M}$. 
Constructions $v$-constructing other constructions in $\mathscr{M}$ are also allowed. 
Each \textit{assignment} $v$ (into \textit{frame} $\mathscr{F} \in \mathscr{M}$, see below)
is the union of all \textit{total} functions $v_i^{\tau^n}$, one for each ${\tau^n}$, such that each variable(-as-construction) $\mathsfit{x}_i$ \textit{ranging over} type $\tau^n$ is assigned a $\tau^n$-object $\mathtt{X}_i \in\mathscr{D}_{\tau^n}$.
Notation: $v(\vec{\mathtt{X}}/\vec{\mathsfit{x}})$ abbreviates 
$v(\mathtt{X}_1/\mathsfit{x}_1; ...; \mathtt{X}_m/\mathsfit{x}_m)$, which stands for $v$'s $\vec{\cc{x}}$-\textit{modification} $v'$ such that for each $1 \leq i \leq m$, it assigns 
a $\tau^n_i$-object $\mathtt{X}_i$ to $\mathsfit{x}_i/\tau_i^n$.

Some constructions, e.g. $\acqs \div (\acqb 3 ,\acqb 0)$, $v$-construct nothing at all in $\mathscr{M}$, they are called $v$-\textit{improper in} $\mathscr{M}$; they serve as meanings of \textit{non-denoting expressions}. 
Two constructions are called $v$-\textit{congruent in} $\mathscr{M}$, $\cong$, iff they $v$-construct the same object in $\mathscr{M}$ (examples above), or they are both $v$-improper in $\mathscr{M}$.

By \textit{functions} we mean here set-theoretical \textit{functions-as-mappings} (graphs, ...), not \textit{functions-as-computations}. 
Each function $\obj{f}$ has a certain domain $\mathscr{D}_\mathtt{x}$ of $\obj{f}$'s arguments and a (co-)domain $\mathscr{D}_\mathtt{y}$ of $\obj{f}$'s values; 
a function $\mathtt{f}$ is called \textit{total} / \textit{partial} iff all / some-but-not-all members of its $\mathscr{D}_\mathtt{x}$ are mapped to some members of its $\mathscr{D}_\mathtt{y}$. 
{Unlike} any total function, each partial function thus \textit{lacks a value} -- i.e. it's \textit{undefined} -- for at least one of its arguments.
Functions-as-computations may be identified with some constructions; some of them are \textit{strict}, 
so the applications involving them 
are $v$-improper in $\mathscr{M}$.

\subsection{Language $\mathscr{L}_\mathsf{TT^*}$}

Constructions are best recorded using $\lambda$-notation. Let for any $E_i$ (construction/object/type), $1 \leq i \leq m$, 
$\vec{E}$ be short for $E_1,..., E_m$, 
while ``$\lambda \vec{\cc{x}}.$'' rather unpacks to 
``$\lambda \cc{x}_1...\cc{x}_m.$''.
Whenever possible, we employ two languages: 
(i) an \textit{object language} whose part is e.g. ``$\obj{X}$'', which stands for the object $\obj{X}$ (which is often an object that isn't a construction) and 
(ii) a \textit{meta-language} whose part is e.g. 
``$\cc{X}$'', which stands for the construction $\cc{X}$ of $\obj{X}$ (if any). 
Let $X:=Y$ mean that $X$ is defined (takes the form, ...) as $Y$.

Each construction of $\mathsf{TT^*}$ (and so each $\mathscr{L}_\mathsf{TT^*}$'s proper expression) is always typed:

\begin{definition}[{Forms of constructions (and of terms of the language $\mathscr{L}_\mathsf{TT^*}$)}]
$\newline$

\vspace{-15pt}
\begin{center}

\begin{tabular}{p{15pt}lll}
&
\textit{Form of} $\cc{X}$:  \, & \textit{$\cc X$'s name:} & \textit{$\cc X$'s typing rule $\cc{X}/\tau^n$:}\\
\hline
\rm{i.} & 
$\cc{x}$ & \textit{variable}  & $\cc{x}/\tau^n$
\\
\rm{ii.} & 
$\acq{\obj{X}}$ & \textit{acquisition} & $\acq{\obj{X}}/\tau^n$;
if $\acq{\obj{X}}/\tau^n:\not = *^n$, one writes
$\acqb{X}$
\\
\rm{iii.} & 
$\cc{F}(\vec{\cc{X}})$ & \textit{application}  & $\cc F(\vec{\cc{X}})/\tau$, where $\cc{X}_1/\tau^n_1; ... ;\cc{X}_m/\tau^n_m;
\cc F/ \langle \vec{\tau}^n \rangle {\to} \tau^n
$
\\
\rm{iv.} & 
$\lambda \vec{\cc{x}} . \cc{Y}$ & $\lambda$-\textit{abstraction} & $\lambda \vec{\cc{x}} . \cc{Y}/\langle \vec{\tau}^n \rangle {\to} \tau^n$, where $\cc{Y}/\tau^n; \cc{x}_1/\tau^n_1; ...; \cc{x}_m/\tau^n_m$
\\
\end{tabular}
\end{center}

\end{definition}

%%%%%%%%%%%%%%%%%%%%%%%%%%%%%%%%%
\begin{small}
\noindent
\textit{Notes.}
Auxiliary expressions (note that we are not pedantic as regards quotation marks): $(,), \lambda \cc{x}.$ and $\ulcorner, \urcorner$; auxiliary brackets: $[,]$.
\textit{Acquisitions} $\acq{\obj{X}}$ are \textit{primitive} constructions, they are not applications of a certain function to $\obj{X}$. Each acquisition $\acq{\obj{X}}$ $v$-constructs $\obj{X}$ in just one direct construction step of `delivering' $\obj{X}$ and leaving it as it is.
Acquisitions can be thus seen as `procedural constants'; 
\textit{variables} are `procedural', too. 
\textit{Applications} $\cc{F}(\vec{\cc{X}})$ are `juxtapositions' of constructions such that if $\cc{F}$ $v$-constructs a function $\obj{f}$ in $\mathscr{M}$ whose argument $\langle \vec{\obj{x}}\rangle $ consists of entities 
$v$-constructed by $\vec {\cc{X}}$ in $\mathscr{M}$, and $\obj{f}$ is defined for 
$\langle \vec{\obj{x}}\rangle $, then the whole application $v$-constructs $\obj{y} := \obj{f}(\vec{\obj{x}})$ in $\mathscr{M}$. 
(Irreducibility of $m$-ary partial functions to unary ones, proved in \cite{tichy1982}, necessitates $\cc{F}(\vec{\cc{X}})$ instead of  $\cc{F}' (\cc{X}_{1} ... (\cc{X}_{m-1} (\cc{X}_{m})))$; similarly for types.)
Each \textit{abstraction} $\lambda \vec{\cc{x}} . \cc{Y}$ 
$v$-constructs a function $\obj{f}$ in $\mathscr{M}$ from $m$-tuples $v$-constructed by $\vec{\cc{X}}$ in $\mathscr{M}$ even on $\vec{\cc{x}}$-modifications of $v$, i.e. $v'$, 
to values that are $v^{(')}$-constructed in $\mathscr{M}$ by  abstraction's body $\cc{Y}$. See our \cite{kuchynka-raclavsky2024} for an \textit{exact} description of $\mathscr{L}_{\mathsf{TT^*}}$'s semantics.
\end{small}

\subsection{Types, orders, frames, models}

\textit{Typing}. 
Let $\tau^n, \tau^n_0, \vec{\tau}^n$ be \textit{type variables} (in the following sections, ``$^n$'' will be suppressed) and\linebreak
$o, \iota,*^1,  ...,*^n$ be \textit{type constants}.
Expressions of $\mathscr{L}_\mathsf{TT^*}$, but primarily 
$\mathsf{TT^*}$'s constructions, are typed via \textit{typing statements} of the form $\cc{X}/\tau^n,$ saying that for any $v$, the construction $\cc{X}$ should $v$-construct an object of type $\tau^n$; $\cc{X}/\tau^n$ is often called a $\tau^n$-\textit{construction}.
Notation: $\cc{X},\cc{Y}/\tau^n$ is short for $\cc{X}/\tau^n; \cc{Y}/\tau^n$.
Examples: $\cc{x}/ \tau^n$; 
$\acqs{\div} (\acqb{3},\acqb{1}), \acqs{\div} (\acqb{3}, \acqb{0})/\iota$, where $\iota$ is interpreted as $\mathbb{R}$; $\acqb{0}, \acqb{1}, \acqb{3}/\iota$;
 $\acqs{\div}/\langle \iota, \iota\rangle{\to}\iota$ (cf. below).

\textit{Interpretation of types.} 
Types $\tau^n$ are interpreted by sets of objects called \textit{domains} $\mathscr{D}_{\tau^n}$. 
Members of $\mathscr{D}_{\tau^n}$ are called $\tau^n$-\textit{objects}.
Let $\mathscr{T}$ be a set of types for $\mathscr{L}_\mathsf{TT^*}$.
A \textit{frame} $\mathscr{F} = \{ \mathscr{D}_{\tau^n} \,|\, {\tau^n} \in \mathscr{T} \}$ consists of all domains that interpret all types in $\mathscr{T}$; each $\mathscr{D}_{\tau^n} \in \mathscr{F}$ contains the equality relation $=^{\tau^n}$ and $\Sigma^{\tau^n}$ (below). 
A \textit{model} $\mathscr{M}$ is an \textit{interpretation} for $\mathscr{L}_\mathsf{TT^*}$, i.e. a couple $\langle \mathscr{F} , \mathscr{I}\rangle $ such that the \textit{interpretation mapping} $\mathscr{I}$ maps acquisitions expressed by $\mathscr{L}_\mathsf{TT^*}$'s constants (e.g. ``$=^{\tau^n}$'') to objects of $\mathscr{F}$ (\cite{kuchynka-raclavsky2024}).

\begin{definition}[{Types $\tau^n$}]
Let $1 \leq n  \in \mathbb{N}$. 
\begin{itemize}
\setlength\itemsep{0.1em}
\item[$\mathscr{B}$]
Let $\mathscr{B}= \{ o, \iota \}$ be a \emph{type base} for $\mathscr{L}_\mathsf{TT^*}$ 
such that $\mathscr{D}_o=\{ \obj{T}, \obj{F}\}$ (\textit{truth values}; $\obj{T} \not= \obj{F}$) and 
$\mathscr{D}_\iota$ are `\textit{entities}' (e.g. $\mathscr{D}_\iota=\mathbb{R}$).

\item[$\tau^1$]
\emph{$1$st-order types}:
(a) each type $\tau^\mathscr{B} \in \mathscr{B}$ is a $1$\emph{st-order type}, and 
(b) if $\vec{\tau}^1$ and $\tau_0^1$ are $1$st-order types, $\langle \vec{\tau}^1 \rangle {\to} \tau^1_0$ is also a $1$\emph{st-order type}; $\mathscr{D}_{\langle \vec{\tau}^1 \rangle {\to} \tau^1_0}$ consists of total and partial functions ${\mathscr{D}}_{{\tau}^1_1} \times ... \times {\mathscr{D}}_{{\tau}^1_m}  \to \mathscr{D}_{{\tau}^1_0}$. 

\item[$*^n$]
Let $*^n$ be type such that $\mathscr{D}_{*^n}$ consists of all $n$\emph{th-order constructions}, i.e. constructions whose 
subconstructions $v$-construct (if $v$-proper) objects in $\mathscr{M}$ of $n$th-order types. 

\item[$\tau^{n+1}$]
$(n{+}1)$\emph{st-order types}: 
(a) each $n$th-order type $\tau ^n$ is an $(n{+}1)$\emph{st-order type}; 
(b) the type $*^n$ is an $(n{+}1)$\emph{st-order type},
and, 
(c) if $\vec{\tau}^{n{+}1}$ and $\tau^{n{+}1}_0$ are $(n+1)$st-order types, 
then $\langle \vec{\tau}^{n{+}1} \rangle {\to} \tau^{n{+}1}_0$ is also an $(n{+}1)$\emph{st-order type};
$\mathscr{D}_{\langle \vec{\tau}^{n{+}1} \rangle {\to} \tau^{n{+}1}_0}$ consists of total and partial functions ${\mathscr{D}}_{{\tau}^{n{+}1}_1} \times ... \times {\mathscr{D}}_{{\tau}^{n{+}1}_m}  \to \mathscr{D}_{{\tau}^{n{+}1}_0}$. 
\end{itemize}
\end{definition}

%%%%%%%%%%%%%%%%%%%%%%%%%%%%%%%%%
\begin{small}
\noindent
\textit{Notes.}
Auxiliary brackets: $(, )$.
Types defined in steps ($\tau^1$.b) and ($\tau^{n+1}$.c) are called \textit{function types}, for they're interpreted by domains consisting of $m$-ary functions.
\textit{Sets} of $\tau$-objects, i.e. of $\mathscr{D}_\tau$'s members, are identified with characteristic functions in $\mathscr{D}_{\tau{\to}o}$; similarly for $m$-ary \textit{relations}.
Domains are pairwise disjoint, except 
$\mathscr{D}_{*^1} \subset \mathscr{D}_{*^2} \subset ... \subset 
\mathscr{D}_{*^n} $ (\textit{cumulativity});
there is no greatest order $n \in \mathbb{N}$. Neither \textit{Russell's paradox}, nor e.g. \textit{Russell-Myhill's paradox} about propositions (as identified with $o$-constructions) is possible in $\mathsf{TT^*}$ (cf. \cite{raclavsky2020}).
We cannot enjoy the higher orders in this short paper.
If $\cc{X}$ $v$-constructs $\obj{X}$ (if any) in $\mathscr{M}$:
$\cc{X}/\tau^n$ indicates %$\tau^n$ such that 
$\obj{X}\in \mathscr{D}_{\tau^n}$ and $\cc{X} \in \mathscr{D}_{*^n}$. 
%%%%%%%%%%%%%%%%%%%%%%%%%%%%%%%%%
\end{small}

\section{Natural deduction in sequent style, $\mathsf{ND_{TT^*}}$}

$\mathsf{ND_{TT^*}}$, which we borrow and slightly adjust  from \cite{kuchynka-raclavsky2024,raclavsky2022},
stems from Tich\' {y}'s systems \cite{tichy1982,tichy1986} for his  partial $\mathsf{TT}$. It's essentially an $\mathsf{ND}$ \textit{in sequent style}, but with `\textit{signed  formulas}', 
so it's a kind of \textit{labelled calculi}, cf. Gabbay \cite{gabbay1996}.
In Kuchy\v{n}ka and Raclavsk\'{y} \cite{kuchynka-raclavsky2024}, \textit{Henkin-completeness} of $\mathsf{ND_{TT^*}}$, and thus the \textit{higher-order logic} ($\mathsf{HOL}$) we apply here, w.r.t. an exact semantics of $\mathscr{L}_{\mathsf{TT^*}}$ is proved in details.

\subsection{Matches, sequents and derivation rules}

$\mathsf{ND}_\mathsf{TT*}$'s 
\textit{rules} $\mathtt{R}$ are made from sequents, while 
\textit{sequents} $\mathtt{S}$ are made from 
$\mathsf{ND}_\mathsf{TT*}$'s \textit{statements} called \textit{matches} $\mathtt{M}$. Here are three motivations a.--c. for introducing matches.

a. \qquad 
Each $\mathtt{M}$ states $v$-\textit{congruence} in $\mathscr{M}$ of a certain (typically compound) construction $\cc{X}$ with a (typically simple) variable or acquisition $\acqb{x}$. So the best notation for $\mathtt{M}$ would be $\cc{X} \cong \acqb{x}$, where $\cong $ is the \textit{strong equality} operator 
(it holds even if $\cc{X}$ and $\acqb{x}$ are both $v$-improper $\mathscr{M}$), which indicates the `equational character' of the system.
We rather write $\cc{X} {\,:^\tau\,} \acqb{x}$, which displays the type $\tau$ of each of $\cc{X}$ and $\acqb{x}$
and underlines that matches present \textit{signed formulas}.
As signed formulas, matches obviously increase the deduction power of $\mathsf{ND_{TT^*}}$; to illustrate, from 
$\acqs{\supset} (\varphi,\psi) {\,:^o\,}  \acqb{F}$ one deduces e.g. $\psi {\,:^o\,}  \acqb{F}$. The term ``match'' is of course auxiliary and our above explanation admittedly specific: ``$\varphi \; \textrm{true}$'' or ``$\textrm{T}: \varphi$'' (both saying `the formula $\varphi$ has the value True', which is encoded even by our $\varphi {\,:^o\,}  \acqb{T}$) are a familiar and ubiquitous concept in most (if not all) computer-science-related writings on natural deduction and was first employed in semantic tableaux method.

b. \qquad 
The use of signed formulas is especially fruitful when 
dealing with partiality. 
Let $\acqs{\bot}^\tau/\tau$ be any $v$-improper $\tau$-construction; ``$^\tau$'' will usually be suppressed. 
$\acqs{\bot}^\tau$ may perhaps seem to play a role of so-called 
\textit{dummy value} (or \textit{null value}) known from algebraic approaches of e.g.
$\mathsf{FL}$ by Scott \cite{scott1979}. 
But there is a crucial difference:
Scott and many others use \textit{denotational semantics} in which something (namely the dummy value) must interpret a non-denoting expression, otherwise it's meaningless (just as non-well-formed expressions); in the \textit{procedural semantics} followed in this paper, however, a non-denoting (well-formed) expression lacks denotation (reference), but expresses as its \textit{meaning} a specific improper construction $\acqs{\bot}^\tau$.
To illustrate such matches, let $\acqb{3}, \acqb{0} / \iota$ (the numbers-as-objects $3,0$), $ \acqs{\div} / \langle \iota, \iota \rangle {\to} \iota$ (the familiar division mapping): the match 
$\acqs{\div}  (\acqb{3}, \acqb{0}) :^\iota \acqs{\bot}$ says that the two constructions flanking $:^\iota$ are $v$-congruent in $\mathscr{M}$
(for they are both $v$-improper); note that we do not postulate
a `dummy number' in our ontology that is allegedly computed by $\acqs{\div} (\acqb{3}, \acqb{0})$.

c. \qquad 
Last but not least, the \textit{monotonicity} of 
$\vDash$ is preserved, for 
each $\mathtt{M}$ definitely either holds, or not. 
Then the following situation of common partial logics, criticised by Blamey \cite{blamey1986}, is excluded:  
let $\sim$ be the familiar function of negation; 
if $\varphi$ and so even $\sim \varphi$ have the value $\bot$, and 
$\varphi \vDash \psi $, then 
$\sim \psi \nvDash \sim \varphi $.

\paragraph{i. Matches.} Let $\cc{X}, \cc{x}, \acqb{X}, \acq{\obj{X}} /\tau$.
Matches split into two types, a. and b.
Each of three a.-type matches 
\[
\text{ a. } 
\qquad
\mathtt{M} := 
 \cc{X} {\; :^{\tau} \;}\acqb{X}
 \;|\;
\cc{X} {\; :^{\tau} \;}\acq {\obj{X}}
 \;|\;
\cc{X} {\; :^{\tau} \;}\cc{x}
\]
says that $\cc{X}$ is $v$-proper in $\mathscr{M}$. Notation: $\cc{X}{\; :^{\tau} \;}\acqb{x}$ represents any a.-type matches. 
Each b.-type match

\[
\text{ b. } 
\qquad 
\mathtt{M} := 
\cc{X} {\; :^{\tau} \;}\bot
\]
says that  $\cc{X}$ is $v$-improper in $\mathscr{M}$. 
Notation: $\cc{X}{\; :^{\tau} \;}\underline{\acqb{x}} $ covers variants $\cc{X}{\; :^{\tau} \;}\acqb{x}$ and $\cc{X}{\; :^{\tau}\;}\acqs{\bot}$. 
An assignment $v$ \textit{satisfies} $\cc{X}{:^\tau \;} \underline{\acqb{x}}$ in $\mathscr{M}$ iff $\cc{X} \cong \underline{\acqb{x}}$ in $\mathscr{M}$.

\paragraph{ii. Sequents.} 
A \textit{sequent} 
\[
\mathtt{S}:= \; \Gamma \longrightarrow \mathtt{M}
\]
may be seen as a couple consisting of a finite \textit{set} (not multiset) $\Gamma$ 
of matches and a match $\mathtt{M}$ 
that follows from $\Gamma$.
$\mathtt{S}$ is \textit{valid} in $\mathscr{M}$ iff every $v$ that satisfies all members of $\Gamma$ in $\mathscr{M}$ also satisfies $\mathtt{M}$ in $\mathscr{M}$. Notation: where $\Delta $ is a set of matches,
$\Gamma , \Delta  \longrightarrow \mathtt{M} $ abbreviates
$\Gamma \cup \Delta  \longrightarrow \mathtt{M} $; 
$\Gamma , \mathtt{M} \longrightarrow \mathtt{M} $ abbreviates
$\Gamma \cup \{ \mathtt{M} \} \longrightarrow \mathtt{M}$.

\paragraph{iii. Rules.} 
A \textit{(derivation) rule} 
$\vec{\mathtt{S}} \vdash \mathtt{S}$, is a validity-preserving operation on sequents, usually written 
\vspace{-0pt}
\begin{prooftree}
\AxiomC{$\vec{\mathtt{S}}$}
\LeftLabel{$\mathtt{R} := \;$}
\RightLabel{,}
\UnaryInfC{$\mathtt{S}$}
\end{prooftree}

\vspace{-0pt}
\noindent
where $\vec{\mathtt{S}}$ are its \textit{premisses}, 
$\mathtt{S}$ its \textit{conclusion}.
Each $\mathtt{R}$ says that $\mathtt{S}$ is valid in all models 
in which $\vec{\mathtt{S}}$ are valid.

Let $H$ be an arbitrary \textit{set of sequents}. 
A finite \textit{sequence} $S$ \textit{of sequents}, each member of which being either 
a member of $H$, or the result of the application of a rule from a \textit{set of rules} $R$ to some preceding members of $S$ or members of $H$ is called a \textit{derivation $\mathtt{D}$} 
\textit{of} $S$'s last sequent $\mathtt{S}$ from \textit{H}. 
$\mathtt{D}$ is also called in brief \textit{proof} and (numbered) members of $S$ are called  \textit{steps}.
$H \vdash \mathtt{S}$ presents a \textit{derived rule}.

\subsection{$\mathsf{ND}_{\mathsf{TT}^*}$'s derivation rules}

The rules of $\mathsf{ND}_\mathsf{TT^*}$ may be divided into four groups: i. \textit{structural rules}, ii. \textit{form rules}, iii. \textit{operational rules} 
and iv. \textit{rules for extralogical constants}.
The i.-type rules present general properties of validity, 
the ii.-type rules present properties of validity w.r.t. forms of constructions.
The iii.-type rules make $\mathsf{TT}^*$ a $\mathsf{HOL}$.

Even a cursory inspection of the i.- and ii.-type rules  
reveals that they rather resemble rules familiar from $\mathsf{ND}$ for modern $\mathsf{STT}$, compare e.g. Hindley and Seldin \cite{hindley-seldin2008} and $\mathsf{ND_{TT^*}}$'s rules (AX), (WR), (CUT) (see Def. 3 below). 
Those $\textsf{ND}$s usually utilise
sequents of the form $\Gamma \longrightarrow t : \tau$, in which term $t$ is typed by $\tau$, while we use 
$\Gamma \longrightarrow  \cc{X} {\; :}^{\tau}\, \underline{\acqb{x}}$ to the same effect.
Nevertheless, labelling $\cc{X}$ by $\underline{\acqb{x}}$ (cf. below) for the reasons stated above gives rise to a few new 
rules; in Def. 3, see esp. (EXH).
Deduction systems  $\mathsf{STT}$ by 
Beeson \cite{beeson1985}, Feferman \cite{feferman1995} and 
Farmer \cite{farmer1990} are not sequent-style ones as $\mathsf{ND_{TT^*}}$ is, 
so their encoding mechanisms differ.
To illustrate, the fact that both variables and constants always denote is expressed by 
their axioms $x_\tau \downarrow$ (where $\downarrow$ reads `is denoting') and 
$c_\tau \downarrow$, while $\mathsf{ND_{TT^*}}$ uses (TM) (cf. Def. 4) for both; similarly for $\lambda x_\tau. t \downarrow$ and our ($\lambda$-INST) (cf. Def. 4).

Notational agreement (holding unless stated otherwise).
Let the following symbols be any:   
$\mathtt{M}_{(i)}$ -- match; $\mathtt{S}_{(i)}$ -- sequent; 
$\Gamma$ (or $\Delta$) -- set of matches;  
$\cc{x}_{(i)}, \cc{y}, \cc f, \cc g$ -- variables; $\acqb{x}_{(i)}, \acqb{y}, \acqb{f}, \acqb{g}$ --  
acquisitions/variables; 
$\cc{X}_{(i)},\cc{Y}, \cc F$ -- 
constructions.
The constructions fit types as follows:
 $\cc{X}, \cc{Y}, \acqb{x}, \acqb{y}/\tau; \acqb{x}_1, \cc{X}_1/
\tau; ...; \acqb{x}_m, \cc{X}_m/\tau; \cc F,  $ 
\\ %HARD BREAK%%%%%%%%%%%%%%%%%%%%%%%%%%%
$\acqb{f}, \acqb{g}/ \langle \vec{\tau} \rangle {\to} \tau$; let $\phi$ abbreviate $\langle \vec{\tau} \rangle  {\to} \tau$. 
\textit{Conditions} of each relevant $\mathtt{R}$ typically include: (i) the variables occurring within $\mathtt{R}$ are pairwise distinct and (ii) they are not free in $\Gamma, \mathtt{M}$ and other constructions occurring in $\mathtt{R}$.\footnote{$\cc{x}$ is called \textit{free in} $\mathtt{M}$ of the form $\cc{X}{\; :^{\tau} \;}\underline{\acqb{x}}$ iff it's free in $\cc{X}$ or $\acqb{x}$; $\cc{x}$ is called \textit{free in} $\Gamma$ iff it's free at least in one $\mathtt{M}_{i} \in \Gamma$.}
Let $\cc{Y}_{[\cc{X}/\cc{x}]}$
stand for the construction $\cc{Y}$ in which \textit{free} occurrences of $\cc x$ are \textit{substituted} by $\cc X$, as defined in \cite{kuchynka-raclavsky2024};
steps differing by rewriting terms on the basis of substitution are suppressed.

\medskip

\begin{definition}[Structural rules]
For informal description of (nearly all) our rules, see \cite{raclavsky2020}. (AX) is the \textit{axiom rule};  (WR) is the \textit{weakening rule}; (CUT) is the \textit{deletional cut rule} (cf. \cite{gabbay1996}); (EFQ) is the \textit{ex falso/contradictione quodlibet} \textit{rule}. (EXH) is the \textit{exhaustation rule} -- it says that if the assumptions that $\cc{X}$ is / is not $v$-proper are needed for 
$\mathtt{M}$'s following from $\Gamma$, then $\mathtt{M}$ follows from $\Gamma$ independently of the assumptions.
%%%%%%%%%%%%%%%%%%%%%%%%%%%%%%%%%
\begin{small}
$\newline$

\vspace{-10pt}
\begin{center}
\begin{minipage}{0.30\textwidth}

\begin{prooftree}
\AxiomC{}
\RightLabel{\rm (AX)}
\UnaryInfC{$\Gamma, \mathtt{M} \longrightarrow \mathtt{M}$}
\end{prooftree}

\end{minipage}
\begin{minipage}{0.38\textwidth}

\begin{prooftree}
\AxiomC{$\Gamma \longrightarrow \mathtt{M}_1$}
\AxiomC{$\Gamma,\mathtt{M}_1 \longrightarrow \mathtt{M}_2$}
\RightLabel{\rm (CUT)}
\BinaryInfC{$\Gamma \longrightarrow \mathtt{M}_2$}
\end{prooftree}

\end{minipage}
\begin{minipage}{0.30\textwidth}

\begin{prooftree}
\AxiomC{$ \Gamma \longrightarrow \mathtt{M}$}
\RightLabel{\rm (WR)}
\UnaryInfC{$\Gamma, \Delta \longrightarrow \mathtt{M}$}
\end{prooftree}

\end{minipage}

\smallskip

\begin{minipage}{0.48\textwidth}

\begin{prooftree}
\AxiomC{$ \Gamma \longrightarrow  \mathtt{M}_1$}
\hspace{-20pt}
\AxiomC{$ \Gamma \longrightarrow  \mathtt{M}_2$}
\RightLabel{\rm (EFQ)}
\BinaryInfC{$\Gamma \longrightarrow \mathtt{M}$}
\end{prooftree}

\end{minipage}
\begin{minipage}{0.50\textwidth}
\begin{prooftree}
\AxiomC{$\Gamma, \cc{X}{\; :^{\tau} \;}\acqs{\bot} \longrightarrow \mathtt{M}$}
\hspace{-20pt}
\AxiomC{$\Gamma, \cc{X} {\; :^{\tau} \;}\cc{x} \longrightarrow \mathtt{M} $}
\RightLabel{\rm (EXH)}
\BinaryInfC{$\Gamma \longrightarrow \mathtt{M}$
}
\end{prooftree}
\end{minipage}

\end{center}

\smallskip
\noindent
{\footnotesize 
\textit{Condition} {\rm (EFQ)}: $\mathtt{M}_1$ and $\mathtt{M}_2$ 
are \emph{patently incompatible} -- they are either of the forms $\cc{X}{\; :^{\tau} \;}\acqb{x}$ and $\cc{X}{\; :^{\tau} \;}\acqs{\bot}$, or of the forms
$\cc{X}{\; :^{\tau} \;}\acqb{x}_1$ and $\cc{X} {\; :^{\tau} \;}\acqb{x}_2$, where $\acqb{x}_1$ and $\acqb{x}_2$ acquire distinct objects $\obj{X}_1$ and $\obj{X}_2$.
Patently incompatible matches are never satisfied (in $\mathscr{M}$) by the same $v$.}

\end{small}
\end{definition}
%%%%%%%%%%%%%%%%%%%%%%%%%%%%%%%%%

\iffalse %%%%%%%%%%%%%%%%%%%%%%%%%%%
\begin{small}
\textit{Remarks.}
(AX) %(Tich\'{y}: r. of trivial sequent) 
serves as the well-known `start axiom' $\varphi \longrightarrow \varphi$.
(WR) %(Tich\'{y}: r. of redundant match) 
is the Weakening Rule; note that we use sets, not multisets or lists as is usual in sequent calculus, hence there is no Contraction Rule.
(CUT) %(Tich\'{y}: r. of simplification) 
is the Unitary Cut Rule.
%, a `metalinguistic' Modus Ponens,  
%in accordance with Gabbay \cite{gabbay1996}, 
(EFQ) is a `metalinguistic' form of \textit{ex falso [contradictione] sequitur quodlibet}.
(EXH) (`exhaustation') 
says that $\mathtt{M}$ follows from $\Gamma$ even without assuming which object (if any) is $v$-constructed by $\cc{X}$.
\end{small}
\fi %%%%%%%%%%%%%%%%%%%%%%%%%%%

\begin{definition}[Form rules]
The sense of (TM), the \textit{trivial match rule}, and
($\lambda$-INST), the $\lambda$-\textit{instantiation rule}, was indicated above. The rest of the form rules govern applications that are $v$-proper. The $\beta$-\textit{conversion rules} ($\beta$-CON) (\textit{contraction r.}) and ($\beta$-EXP) (\textit{expansion r.}) are very important, while the rules for \textit{substitution in applications} (a-SUB) are very useful, too; (EXT)  is the \textit{extensionality rule}.

\begin{small}
$\newline$

\vspace{-15pt}
\begin{center}

\begin{minipage}{0.30\textwidth}
\begin{prooftree}
\AxiomC{}
\RightLabel{\rm (TM)}
\UnaryInfC{$\Gamma \longrightarrow \acqb{x} {\; :^{\tau} \;}\acqb{x} $}
\end{prooftree}
\end{minipage}
\begin{minipage}{0.33\textwidth}

\begin{prooftree}
\AxiomC{$ \Gamma, \lambda \vec{\cc{x}} . \cc{Y} {\; :^{\phi} \;} \cc f \longrightarrow \mathtt{M}$}
\RightLabel{\rm ($\lambda$-INST)}
\UnaryInfC{$\Gamma \longrightarrow \mathtt{M}$}
\end{prooftree}
\end{minipage}
\begin{minipage}{0.35\textwidth}
\begin{prooftree}
\AxiomC{$\Gamma \longrightarrow [\lambda \vec{\cc{x}} . \cc{Y}] (\vec{\cc{X}}) {\; :^{\tau} \;}\acqb{y} $}
\RightLabel{\rm ($\beta$-CON)}
\UnaryInfC{$\Gamma \longrightarrow  \cc{Y}_{(\vec{\cc{X}}/\vec{\cc{x}})} {\; :^{\tau} \;} \acqb{y}$}
\end{prooftree}
\end{minipage}

\begin{prooftree}
\AxiomC{$\Gamma \longrightarrow  \cc{Y}_{(\vec{\cc{X}}/\vec{\cc{x}})} {\; :^{\tau} \;} {\acqb{y}}$}
\AxiomC{$\Gamma \longrightarrow \cc{X}_1 {\; :}^{\tau_1}  \acqb{x}_1 $}
\AxiomC{$... $}
\AxiomC{$\Gamma \longrightarrow \cc{X}_m {\; :^{\tau} \;} \acqb{x}_m$}
\RightLabel{\rm ($\beta$-EXP)}
\QuaternaryInfC{$\Gamma \longrightarrow [\lambda \vec{\cc{x}} . \cc{Y}] (\vec{\cc{X}}) {\; :^{\tau} \;} {\acqb{y}}$}
\end{prooftree}

\begin{prooftree}
\AxiomC{$\Gamma \longrightarrow  \cc F(\vec{\cc{X}}){\; :^{\tau} \;} \acqb{y}$}
\AxiomC{$\Gamma \longrightarrow  \cc{X}_1{\; :^{\tau_1}  \;} \acqb{x}_1$}
\AxiomC{$...$}
\AxiomC{$\Gamma \longrightarrow  \cc{X}_m {\; :^{\tau_m} \;}  \acqb{x}_m$}
\RightLabel{\rm (a-SUB.i)}
\QuaternaryInfC{$\Gamma \longrightarrow \cc{F}(\vec{\acqb{x}}){\; :^{\tau} \;} \acqb{y}$}
\end{prooftree}

\begin{prooftree}
\AxiomC{$ \Gamma \longrightarrow \cc F(\vec{\acqb{x}}){\; :^{\tau} \;} \acqb{y}$}
\AxiomC{$ \Gamma \longrightarrow \cc{X}_1 {\; :^{\tau_1} \;}  \acqb{x}_1 $}
\AxiomC{$...$}
\AxiomC{$\Gamma \longrightarrow  \cc{X}_m {\; :^{\tau_m} \;}  \acqb{x}_m $}
\RightLabel{\rm (a-SUB.ii)}
\QuaternaryInfC{$\Gamma \longrightarrow \cc F(\vec{\cc{X}}) {\; :^{\tau} \;} \acqb{y}$}
\end{prooftree}

\begin{prooftree}
\AxiomC{$\Gamma \longrightarrow  \cc F(\vec{\cc{X}}){\; :^{\tau} \;} \acqb{y}$}
\AxiomC{$\Gamma , \cc F{\; :^{\phi} \;}  \cc f, \cc{X}_1{\; :}^{\tau_1}  \cc{x}_1, ..., \cc{X}_m {\; :^{\tau_m} \;} \cc{x}_m \longrightarrow  \mathtt{M}$}
\RightLabel{\rm (a-INST)}
\BinaryInfC{$\Gamma \longrightarrow \mathtt{M}$}
\end{prooftree}

\noindent
\begin{minipage}{1.00\textwidth}
\begin{prooftree}
\AxiomC{$ \Gamma , \acqb{f}(\vec{\cc{x}}) {\; :^{\tau} \;} \cc{y} \longrightarrow  \acqb{g}(\vec{\cc{x}}){\; :^{\tau} \;} \cc{y}$}
\AxiomC{$ \Gamma , \acqb{g}(\vec{\cc{x}}) {\; :^{\tau} \;}\cc{y} \longrightarrow  \acqb{f}(\vec{\cc{x}}) {\; :^{\tau} \;} \cc{y}$}
\RightLabel{\rm (EXT)}
\BinaryInfC{$\Gamma \longrightarrow \acqb{g}{\; :^{\tau} \;}\acqb{f}$}
\end{prooftree}
\end{minipage}

\begin{prooftree}
\AxiomC{$\Gamma \longrightarrow \mathsfit{F} {:^{\phi} \;} \mathsfit{f}$}
\AxiomC{$\Gamma , \mathsfit{X}_1 {\; :^{\tau_1}\;} \mathsfit{x}_1; ... ; \mathsfit{X}_m {\; :^{\tau_m}\;} \mathsfit{x}_m \longrightarrow \mathtt{M}$}
\RightLabel{\rm (a-IMP$^\bot$)
\qquad
{\footnotesize 
Condition: 
%{\rm (a-IMP$^\bot$)}: 
except $\tau$, $\phi :\neq \langle \vec{\tau}\rangle {\to}\tau$.
}
}
\BinaryInfC{$\Gamma \longrightarrow \mathsfit{F}(\vec{\mathsfit{X}}) {\; :^{\tau} \;} \acqs{\bot}$}
\end{prooftree}

\end{center}
\end{small}
\end{definition}

$\mathsf{TT^*}$ employs the following familiar functions-as-mappings:
the \textit{negation} $\sim$ maps $\mathtt{T}$ to $\mathtt{F}$ and vice versa; 
the \textit{material conditional} $\supset$ maps
$\langle \mathtt{T}, \mathtt{F} \rangle$ to $\mathtt{F}$ but
$\langle \mathtt{T}, \mathtt{T} \rangle$,
$\langle \mathtt{F}, \mathtt{T} \rangle$,
$\langle \mathtt{F}, \mathtt{F} \rangle$ to $\mathtt{F}$; 
the \textit{universal quantifier} $\Pi^{\tau}$
maps the function $\mathscr{D}_{\tau} {\to} \mathscr{D}_o$ that assigns $\mathtt{T}$ to all $\tau$-objects to $\mathtt{T}$, but all other functions $\mathscr{D}_{\tau} {\to} \mathscr{D}_o$ to $\mathtt{F}$; 
the \textit{existential quantifier} $\Sigma^{\tau}$ (irreducible to $\Pi^{\tau}$, \cite{raclavsky2022}) maps each function $\mathscr{D}_{\tau} {\to} \mathscr{D}_o$ that assigns $\mathtt{T}$ to at least one $\tau$-object to $\mathtt{T}$, but all other functions $\mathscr{D}_{\tau} {\to} \mathscr{D}_o$ to $\mathtt{F}$; 
the \textit{identity relation} $=^{\tau}$ maps each couple pairing the same $\tau$-object to $\mathtt{T}$, but couples pairing different $\tau$-objects to 
$\mathtt{F}$;
the \textit{singularization} (or \textit{iota}) \textit{function}
$\rotatediota^\tau$ maps 
each function $\mathscr{D}_{\tau} {\to} \mathscr{D}_o$ that assigns $\mathtt{T}$ to just one $\tau$-object to that $\tau$-object, 
and is undefined for all other functions $\mathscr{D}_{\tau} {\to} \mathscr{D}_o$.
Their acquisitions $\acqs{\sim}, \acqs{\supset}, \acqs{\Pi}^\tau , \acqs{\Sigma}^\tau , \acqs{=}^\tau , \acqs{\rotatediota}^\tau$ are governed by the following rules. 

\medskip

\begin{definition}[Operational rules]
Specifying Def. 1 (point iii): $\acqb{T}, \acqb{F}, \acqb{o}, \acqb{o}', \cc{O}, \cc{O}' /o; 
\acqs{\sim} / o {\to} o; 
\acqs{\supset} / \langle o, o\rangle  {\to} o;$
$\acqs{\Pi}^{\tau} , \acqs{\Sigma}^{\tau} / (\tau {\to} o) {\to} o; 
\acqs{=}^{\tau} / \langle \tau , \tau \rangle {\to} o; 
\acqs{\rotatediota}^{\tau} / (\tau \to o ) {\to} \tau; \acqs{\bot}^\tau/ \tau;
\acqb{c}, \cc{C} / \tau{\to}o$ (`class').

\begin{center}

\begin{minipage}{.45\textwidth}
\begin{prooftree}
\AxiomC{$\Gamma , \acqb{o} {:^o \;}  \acqb{o}' \longrightarrow  \mathtt{M}_1$}
\AxiomC{$\Gamma , \acqb{o} {:^o \;}  \acqb{o}' \longrightarrow  \mathtt{M}_2$}
\RightLabel{($\sim$-I)}
\BinaryInfC{$\Gamma \longrightarrow  \acqs{\sim}(\acqb{o}) {:^o \;}  \acqb{o}'$}
\end{prooftree}
\end{minipage}
\begin{minipage}{.53\textwidth}
\textit{Condition} ($\sim$-I): $\mathtt{M}_1$ and $\mathtt{M}_2$ are patently incompatible.
\end{minipage}

\smallskip

\begin{minipage}{.53\textwidth}
\begin{prooftree}
\AxiomC{$\Gamma , \acqb{o} {:^o \;}  \acqb{T} \longrightarrow  \mathtt{M}$}
\AxiomC{$\Gamma , \acqb{o} {:^o \;}  \acqb{F} \longrightarrow  \mathtt{M}$}
\RightLabel{(RA)}
\BinaryInfC{$\Gamma \longrightarrow \mathtt{M}$}
\end{prooftree}
\end{minipage}
\begin{minipage}{.45\textwidth}
\begin{prooftree}
\AxiomC{$\Gamma , \acqs{\sim}(\acqb{o}) {:^o \;}  \cc{o} \longrightarrow  \mathtt{M} $}
\RightLabel{($\sim$-INST)}
\UnaryInfC{$\Gamma \longrightarrow  \mathtt{M} $}
\end{prooftree}
\end{minipage}

\medskip

\noindent
\begin{minipage}{.25\textwidth}
\begin{prooftree}
\AxiomC{$\Gamma , \acqb{o} {:^o \;} \acqb{T} 
\longrightarrow \acqb{o}' {:^o \;} \acqb{T} 
$}
\RightLabel{($\supset$-I)}
\UnaryInfC{$\Gamma \longrightarrow \acqs{\supset} (\acqb{o}, \acqb{o}' ) {:^o \;} \acqb{T} $}
\end{prooftree}
\end{minipage}
\begin{minipage}{.41\textwidth}
\begin{prooftree}
\AxiomC{$\Gamma \longrightarrow  \acqs{\supset} (\cc{O}, \cc{O}') {:^o \;} \acqb{T} $}
\AxiomC{\hspace{-17pt}$\Gamma \longrightarrow \cc{O} {:^o \;} \acqb{T} $}
\RightLabel{($\supset$-E)}
\BinaryInfC{$\Gamma \longrightarrow  \cc{O}' {:^o \;} \acqb{T} $}
\end{prooftree}
\end{minipage}
\begin{minipage}{.32\textwidth}
\begin{prooftree}
\AxiomC{$\Gamma , \acqs{\supset} (\acqb{o}, \acqb{o}') {:^o \;} \cc{o} \longrightarrow \mathtt{M}$}
\RightLabel{($\supset$-INST)}
\UnaryInfC{$\Gamma \longrightarrow  \mathtt{M} $}
\end{prooftree}
\end{minipage}

\medskip

\noindent
\begin{minipage}{.23\textwidth}
\begin{prooftree}
\AxiomC{$\Gamma \longrightarrow  \cc{C} (\cc{X}) {:^o \;} \acqb{T} $}
\RightLabel{($\Sigma$-I)}
\UnaryInfC{$\Gamma \longrightarrow \acqs{\Sigma}^\tau (\cc{C}) {:^o \;} \acqb{T}  $}
\end{prooftree}
\end{minipage}
\begin{minipage}{.42\textwidth}
\begin{prooftree}
\AxiomC{$\Gamma \longrightarrow  \acqs{\Sigma}^\tau (\cc{C}) {:^o \;} \acqb{T} $}
\AxiomC{\hspace{-8pt}$\Gamma , \cc{C} (\acqb{x}) {:^o \;} \acqb{T} \longrightarrow  \mathtt{M}$}
\RightLabel{($\Sigma$-E)}
\BinaryInfC{$\Gamma \longrightarrow  \mathtt{M}$}
\end{prooftree}
\end{minipage}
\begin{minipage}{.33\textwidth}
\begin{prooftree}
\AxiomC{$\Gamma , \acqs{\Sigma}^\tau (\acqb{c}) {:^o \;} \cc{o} \longrightarrow \mathtt{M}$}
\RightLabel{($\Sigma$-INST)}
\UnaryInfC{$\Gamma \longrightarrow  \mathtt{M} $}
\end{prooftree}
\end{minipage}

\medskip

\noindent
\begin{minipage}{.23\textwidth}
\begin{prooftree}
\AxiomC{$\Gamma \longrightarrow  \cc{C} (\cc{x}) {:^o \;} \acqb{T} $}
\RightLabel{($\Pi$-I)}
\UnaryInfC{$\Gamma \longrightarrow \acqs{\Pi}^\tau (\cc{C}) {:^o \;} \acqb{T}  $}
\end{prooftree}
\end{minipage}
\begin{minipage}{.36\textwidth}
\begin{prooftree}
\AxiomC{$\Gamma \longrightarrow  \acqs{\Pi}^\tau (\cc{C}) {:^o \;} \acqb{T} $}
\RightLabel{($\Pi$-E)}
\UnaryInfC{$\Gamma \longrightarrow  \cc{C} (\acqb{x}) {:^o \;} \acqb{T} $}
\end{prooftree}
\end{minipage}
\begin{minipage}{.32\textwidth}
\begin{prooftree}
\AxiomC{$\Gamma , \acqs{\Pi}^\tau (\acqb{c}) {:^o \;} \cc{o} \longrightarrow \mathtt{M}$}
\RightLabel{($\Pi$-INST)}
\UnaryInfC{$\Gamma \longrightarrow  \mathtt{M} $}
\end{prooftree}
\end{minipage}

\medskip

\noindent
\begin{minipage}{.25\textwidth}
\begin{prooftree}
\AxiomC{$\Gamma \longrightarrow \cc{X}  {:^\tau \;} \acqb{x} $}
\RightLabel{($=$-I)}
\UnaryInfC{$\Gamma \longrightarrow \acqs{=}^\tau (\cc{X}, \acqb{x} ) {:^o \;} \acqb{T}  $}
\end{prooftree}
\end{minipage}
\begin{minipage}{.33\textwidth}
\begin{prooftree}
\AxiomC{$\Gamma \longrightarrow  \acqs{=}^\tau (\cc{X}, \acqb{x}) {:^o \;} \acqb{T} $}
\RightLabel{($=$-E)}
\UnaryInfC{$\Gamma \longrightarrow   \cc{X}  {:^\tau \;} \acqb{x} $}
\end{prooftree}
\end{minipage}
\begin{minipage}{.25\textwidth}
\begin{prooftree}
\AxiomC{$\Gamma , \acqs{=}^\tau (\acqb{x}, \acqb{y}) {:^o \;} \cc{o} \longrightarrow \mathtt{M}$}
\RightLabel{($=$-INST)}
\UnaryInfC{$\Gamma \longrightarrow  \mathtt{M} $}
\end{prooftree}
\end{minipage}

\medskip

\noindent
\begin{minipage}{.47\textwidth}
\begin{prooftree}
\AxiomC{$\Gamma \longrightarrow   \cc{C} (\acqb{x}) {:^o \;} \acqb{T}$}
\AxiomC{$\Gamma ,  \cc{C} (\cc{y}) {:^o \;} \acqb{T} \longrightarrow  
\cc{y} {:^\tau \;} \acqb{x}$}
\RightLabel{($\rotatediota$-I)}
\BinaryInfC{$\Gamma \longrightarrow \acqs{\rotatediota}^\tau (\cc{C}) {:^\tau \;} \acqb{x}$}
\end{prooftree}
\end{minipage}
\begin{minipage}{.31\textwidth}
\begin{prooftree}
\AxiomC{$\Gamma \longrightarrow \acqs{\rotatediota}^\tau (\cc{C}) {:^\tau \;} \acqb{x}$}
\RightLabel{($\rotatediota$-E)}
\UnaryInfC{$\Gamma \longrightarrow   \cc{C} (\acqb{x}) {:^o \;} \acqb{T} $}
\end{prooftree}
\end{minipage}

\begin{minipage}{.3\textwidth}
\begin{prooftree}
\AxiomC{$\Gamma , \acqs{\rotatediota}^\tau (\cc{C}) {:^\tau \;} \acqb{x} \longrightarrow \mathtt{M} $}
\AxiomC{$\Gamma \longrightarrow   \cc{C} (\acqb{x}) {:^o \;} \cc{o} $}
\RightLabel{($\rotatediota$-INST)}
\BinaryInfC{$\Gamma \longrightarrow  \mathtt{M} $}
\end{prooftree}
\end{minipage}

\end{center}
\end{definition}

\textit{Notes on operational rules}.
There is a difference between (i) e.g. $\acqb{o}$, which is an acquisition or variable -- in both cases an always $v$-proper $o$-construction, and (ii) $\cc{O}$, which is any form of $o$-constructions -- which needn't be $v$-proper if an application occurs in the place of $\cc{O}$. Note then that all INST-rules require $\acqb{o}$ (etc.) being a $v$-proper construction (in systems that do not employ partial functions, or, more precisely, improper constructions, INST-rules are not needed). 
Rules such as ($\supset$-E) or ($\Pi$-E) omit the condition only seemingly: 
the  condition is imposed on $\cc{O}, \cc{O}'$ or $\cc{C}$ through the fact that $\acqs{\supset} (\cc{O}, \cc{O}')$ or $\acqs{\Pi}^\tau (\cc{C})$ are $v$-proper (they $v$-construct $\mathtt{T}$), hence their subconstructions $\cc{O}, \cc{O}'$ and $\cc{C}$ must be $v$-proper, too.
Most of the operational rules have a straightforward reading; for example, ($\supset$-E) says that if both an implication and its antecedent are true, then we may conclude that its consequent is also true; ($\Pi$-E) says that if the `higher-order concept' $\obj{All}$ applies to a set (some say: class) $\obj{C}$ of $\tau$-objects, then we may conclude that \textit{any} $\tau$-object $\obj{x}$ falls in $\obj{C}$. All these rules, incl. (RA), \textit{the redundant assumption rule}, occur in Tich\'{y}'s \cite{tichy1986} but without any comment or explanation.
But we add here $\iota$-rules (commented below), i.e. the rules for the \textit{iota operator} $\acqs{\rotatediota}^\tau$ first proposed by the present author in \cite{raclavsky2022}.

\textit{Notes on $\acqs{\rotatediota}^\tau$ rules}.
The $\rotatediota$-operator is \textit{prima facie} not `defined' in terms of $\exists, \forall, =$ as in standard approaches, cf. e.g. Russell's contextual introduction of $\rotatediota$-operator in \cite{whitehead-russell1910-13}, $G (\rotatediota x . F(x)) := \exists x ( (\forall y ( F(y) \leftrightarrow y=x )\land G(x) )$ (notation adjusted).
 But a conscientious eye quickly reveals that Russell's $y=x$ is encoded by our $\cc{y} {:^\tau} \cc{x}$ (reread our informal description of matches in 3.1.a). A version of the rule ($\iota$-I) with the match $\acqs{=}^\tau(\cc{y},\acqb{x}) {\, :^o\,} \acqb{T}$ instead of $\cc{y} {:^\tau} \cc{x}$ is easily derivable using the (=-I) rule. 
Russell's $\forall y$ is encoded by our `any' $\cc{y}$ (again, deploy ($\Pi$-I) to obtain a version of the rule in which $\Pi^\tau$, corresponding to $\forall$, is explicit).
Only Russell's $\exists$, the operator of `ontological existence', is not immediately recoverable, ($\iota$-I) thus retains the well-known oscillation between generic/maximality and particular/existential readings of descriptions, cf. e.g. \cite{ludlow2023}. 
 But if certain conditions related to $\acqb{x}$ are met, ($\Pi$-E) and ($\Sigma$-I) allow us to derive the existential reading.
($\rotatediota$-E) captures the well-known idea that the only ${F}$ is an ${F}$, 
which many writers state as an axiom but in our rule-based approach the idea is naturally presented as a rule.
($\rotatediota$-INST) differs from the other INST-rules because the function $\rotatediota$ is partial, not total, so the second premiss had to be added.
 Finally, let us stress at least one consequence of the above indicated fact that 
any application consisting of 
$\acqs{\sim} , \acqs{\supset}, \acqs{=}^\tau, \acqs{\iota}^\tau$ or $\acqs{\supset} $ and $\cc{X}$ (and $\cc{Y}$) that is $v$-improper in $\mathscr{M}$, e.g. $\acqs{=}^\tau (\cc{X}, \cc{Y}$), is  $v$-improper in $\mathscr{M}$ --
`error' is thus `propagated up',
`functions' are \textit{strict}.
An application $\acqs{\rotatediota}^\tau (\cc{C})$ is $v$-improper if $\cc{C}$ is $v$-improper: in such a case, 
$\cc{G} (\acqs{\rotatediota}^\tau (\cc{C}))$ and even 
$\acqs{=}^\iota (\acqs{\rotatediota}^\tau (\cc{C}), \acqs{\rotatediota}^\tau (\cc{C}))$ are also $v$-improper. Contra negative/positive $\mathsf{FL}$s, cf. e.g. Scott \cite{scott1979}, Feferman \cite{feferman1995}, Farmer \cite{farmer1990}, Bencivenga \cite{bencivenga2002}, Lehmann \cite{lehmann2002}, Indrzejczak \cite{indrzejczak2023}, and even Russell \cite{russell1905,whitehead-russell1910-13}.

Numerous rules are \textit{derivable} in $\mathsf{ND_{TT^*}}$.
For proof and discussion of the \textit{Rule of Existential Generalisation} (EG), see \cite{raclavsky2022, raclavsky2022-puzzleseg}; for proofs of (L-$\sim$.iii) and (L-APP.$\bot$), see \cite{kuchynka-raclavsky2024}. Let $\acqs{\bot}^{\tau_{(i)}}/\tau_{(i)}$.

\begin{center}
\small

\begin{minipage}{.3\textwidth}
\begin{prooftree}
\AxiomC{$\Gamma \longrightarrow \cc{C}_{(\cc{X}/\cc{x})} {:^o \;} \acqb{T} $}
\RightLabel{(EG)}
\UnaryInfC{$\Gamma \longrightarrow \acqs{\Sigma}^\tau (\lambda \cc{x} . (\cc{C}(\cc{x}) )) {:^o \;} \acqb{T}  $}
\end{prooftree}
\end{minipage}
\begin{minipage}{.3\textwidth}
\begin{prooftree}
\AxiomC{$\Gamma \longrightarrow 
\acqs{\sim} (\cc{o}) {\; {:}^{o} \; } \acqb{T}$}
\RightLabel{(L-$\sim$.iii)}
\UnaryInfC{$\Gamma \longrightarrow 
\cc{o} {\; {:}^{o} \; } \acqb{F}$}
\end{prooftree}
\end{minipage}
\begin{minipage}{.3\textwidth}
\begin{prooftree}
\small
\AxiomC{$\Gamma \longrightarrow 
\cc{X}_i {\; {:}^{ \tau_i} \; } \acqs{\bot}$}
\RightLabel{(L-APP.$\bot$)}
\UnaryInfC{$\Gamma \longrightarrow 
\cc{Y} (\vec{\cc{X}}) 
{\; {:}^{\tau} \; } \acqs{\bot}
$}
\end{prooftree}
\end{minipage}

\end{center}

\section{Applications to reasoning framed within natural language $\mathsf{NL}$}

The above $\mathsf{TT^*}$ can be extended to endorse various methods of \textit{natural language processing} ($\mathsf{NLP}$), e.g. Tich\'{y}'s \textit{transparent intensional logic} (\textit{TIL}) (e.g. \cite{tichy1988}), or its more effective variant \textit{transparent hyperintensional logic} (\textit{THL}) 
proposed by Kuchy\v{n}ka (p.c.) and developed in Raclavsk\'y 
\cite{raclavsky2020}. For simplicity reasons we use a simplified TIL here (with only one, alethic \textit{modality}) which is rather close to THL. 
For that sake let $\mathscr{T}$ be extended
by the atomic type $\omega$ such that $\mathscr{D}_\omega$
consists of (primitive) entities $\mathtt{w}_1, \mathtt{w}_2, ...$, called \textit{possible worlds}.
($\acqs{\Pi}^\omega$ and $\acqs{\Sigma}^\omega$ may serve as modal operators.)
The meanings of $\mathsf{NL}$ expressions are constructions of $\mathsf{TT^*}$. 
In case of expressions whose \textit{reference} varies dependently \textit{on} $\obj{w}$, the meanings in question are constructions of \textit{possible-worlds} \textit{intensions}, i.e. total or partial functions $\mathscr{D}_\omega {\to} \mathscr{D}_\tau$. 

\textit{Propositions} are intensions with $\tau := o$;
\textit{properties} (or \textit{$m$-ary relations-in-intensions}) of $\tau_1$-objects are intensions with $\tau := (\tau_1 {\to} o)$ (or $\tau := \langle \vec{\tau } \rangle {\to} o$); 
\textit{individual offices} are intensions with $\tau := \iota$; \textit{offices of individual offices} are intensions with $\tau := \omega{\to}\iota$, etc. 
 The well-known PWS-style notion of \textit{individual concepts} was adjusted by Tich\'{y} to his notion of individual offices as total/partial functions from $\langle$possible world, time instant$\rangle$ couples.
 We simplify the concept here due to the omission of time-instants parameter.\footnote{Unlike the original notion of {individual concepts}, Tich\'{y} repeatedly attempted to provide philosophical elucidations of offices, see esp. his papers ``Individuals and their Roles'' and ``Existence and God'' in \cite{tichy2004} and, of course, his \cite{tichy1988}.}

 To simplify things, (declarative) sentences are assumed to express $o$-\textit{constructions}, i.e. constructions of truth values, not propositions; they typically contain a free \textit{possible world variable} $\cc w$, i.e. $\cc{w}/\omega$. 
For further simplification, instead of constructions of properties, we will often deploy $\cc{F}$ such that $\cc {F}/ \iota{\to}o$.

With Tich\'{y} we maintain that (typical empirical, definite) \textit{descriptions} ``$\cc D$'' \textit{of individuals} (such as e.g. ``\textsl{the King of England/France}'') 
express constructions of individual offices; i.e. $\cc{D}/\omega{\to}\iota$. In many cases, $\cc D$ is a complex construction, often involving the iota operator $\acqs{\rotatediota}^\tau$; for examples, see below.
(Analogously for other types of definite descriptions.)
Note carefully that the meaning of description ``$\cc{D}$'' is the construction $\cc{D}$, not an office $\obj{D}$. On the other hand, the denotation of an \textit{empirical description} ``$\cc{D}$'' is an office $\obj{D}$ and the value of $\obj{D}$ in $\obj{w}$ is called the \textit{reference} of ``$\cc{D}$'' \textit{in} $\obj{w}$ -- while Tich\'y used an apt term \textit{occupant} of $\obj{D}$ in $\cc{w}$.
In case of \textit{non-empirical descriptions} ``$\cc{D}$'' such as e.g. ``\textsl{the only number $n$ such that $n= 3 \div 1$}'' we usually got rid of dull functional dependence on $\obj{w}$, ``$\cc{D}$'''s denotation is thus not an office, but simply its constant value, which is thus not distinguished from ``$\cc{D}$'''s reference.
Recall that each (well-formed) description always has a meaning: in case of non-empirical descriptions it is a construction of a $\tau$-object (if any), in case of empirical descriptions it is a construction of a $\tau$-office whose value in given $\obj{w}$ is a $\tau$-object (if any).

Sentences such as 
\begin{center}
``\textsl{The $\cc{D}$ is an $\cc{F}$.}'' (``\textsl{The $\cc{D}$ is in $\cc{R}$ with $\cc{D}'$}'' etc.) 
\end{center}
have often two readings, called \textit{extensional and intensional reading} (it is surprising that such useful distinction evaporated from recent philosophical logic)
\\
(i) \quad In their \textit{extensional reading},
such sentences are aptly paraphrased as
\begin{quote}
``\textsl{The occupant of the office $\cc D$ is an $\cc F$}''.
\end{quote}
In such a reading they express an $o$-construction $\cc{S}$ in which the construction $\cc D$ of the office occurs as applied to $\cc w$, i.e. $\cc{D} (\cc{w})$, which is abbreviated to
$
\cc{D}_\cc{w}
$
($\cc D_{\cc w}/ \iota$ if $\cc D/ \omega{\to}\iota$). (Similarly for other types of  expressions denoting intensions.)
If there is \textit{no} \textit{reference} of ``$\cc D$'' in $\mathtt{w}$, as in the case of ``\textsl{the King of France}'',
% (i.e. the corresponding office is \textit{unoccupied} in $\mathtt{w}$), 
sentences involving them typically
\textit{lack a truth value}.
\\
(ii)  \quad 
In their \textit{intensional reading}, 
such sentences are aptly paraphrased as
\begin{quote}
``\textsl{The office $\cc D$ is an $\cc F$}''.
\end{quote}
In such a reading they express an $o$-construction $\cc{S}$ in which the 
construction $\cc{D}$ is not so applied. I.e., the subject of such an assertion is the individual office \textit{per se}, not its occupant in $\obj{w}$ (as in the extensional case). 
(Similarly for other types of  expressions denoting intensions.)
The type of reading is often indicated by the predicate; to illustrate, let ``$\cc{D}$'' be ``\textsl{the US president}'': 
if ``$\cc{F}$'' is ``\textsl{to be blue-eyed}'', i.e. a predicate applicable to individuals, not offices, one naturally renders ``\textsl{The $\cc{D}$ is $\cc{F}$}'' in the extensional sense; 
if ``$\cc{F}$'' is ``\textsl{to be one of the highest offices}'', i.e. a predicate applicable to offices, not individuals, one naturally renders ``\textsl{The $\cc{D}$ is $\cc{F}$}'' in the intensional sense.

\textit{Examples}. Recapitulation of some type annotations added or changed in this section: $ \cc{x}, \cc{y} / \iota$ ($\mathscr{D}_\iota$ consists of individuals); $\cc{D}/ \omega {\to}\iota; \cc F/ \iota{\to} o; 
\acqs{=}^\iota / \langle \iota, \iota, \rangle {\to} o; 
\acqs{\rotatediota}^\iota / (\iota{\to}o ) {\to} \iota;
 \acqb{T}, \acqb{F} , \acqs{\bot}^o / o; \cc{w} / \omega$. 

\vspace{-10pt}

\begin{center}
\begin{tabular}{ll|ll}

\textit{\small Expression for extension} & \textit{its meaning/type} &
 \textit{expression for intension} & \textit{its meaning/type} \\
\hline

``\textsl{be self-identical}'' & $\lambda \cc{x} .\acqs{=}^\iota (\cc{x}, \cc{x}) / \iota {\to} o$ &
``\textsl{be bald}'' & $\acqb{B}
/ \omega{\to} (\iota {\to} o)$ \\

``\textsl{be identical with}'' & $\acqb{=}^\iota / \langle \iota, \iota \rangle {\to} o$ &
``\textsl{be the King of sth.}'' & $\acqb{K} / \omega{\to}(\langle \iota, \iota \rangle {\to} o)$
 \\

``\textsl{France}'' & $\acqb{Fr} / \iota$ &
``\textsl{the King of France}'' 
& $\lambda \cc{w} . \acqs{\rotatediota}^\iota (\lambda \cc{x} . \acqb{K}_\cc{w} (\cc{x} , \acqb{Fr}) ) / \omega{\to}\iota$ \\

\end{tabular}
\end{center}	

\noindent
\;\
The sentence ``\textsl{The King of France is bald.}'' expresses the $o$-construction \,
$\acqb{B}_\cc{w} (\acqs{\rotatediota}^\iota (\lambda \cc{x} . \acqb{K}_\cc{w} (\cc{x} , \acqb{Fr}) )) $.

\textit{Validity} of $\mathsf{NL}$ \textit{arguments} $\mathtt{A}$ such as

\begin{prooftree}
\AxiomC{``\textsl{The King of France is identical with Louis.}''}
\UnaryInfC{``\textsl{Louis is a King of France}.''}
\end{prooftree}

\noindent
is proof-theoretically \textit{justified} 
by showing a (derived) rule $\mathtt{R}$ of $\mathsf{ND_{TT^*}}$
(where $\acqb{L}/\iota$):

\begin{prooftree}
\AxiomC{$\Gamma \longrightarrow 
\acqs{=}^\iota
(\acqs{\rotatediota}^\iota (\lambda \cc{x} . \acqb{K}_\cc{w} (\cc{x} , \acqb{Fr}) ), \acqb{L}) {\; :^o\;} \acqb{T}$}
\RightLabel{(L.=.Desc-E) (an instance of)}
\UnaryInfC{$\Gamma \longrightarrow 
\acqb{K}_\cc{w} (\acqb{L} , \acqb{Fr})  {\; :^o\;} \acqb{T}$}
\end{prooftree}

\noindent
such that (i) each formalisation (meaning) $\cc{P}_1, ..., \cc{P}_n$ 
of $\mathtt{A}$'s premisses is matched with $\acqb{T}$
(i.e. $\cc{P}_i {\; :^o \; } \acqb{T}$, for each $1 \leq i \leq m$), forming thus the succedents of $\mathtt{R}$'s premisses, while (ii) the formalization of $\mathtt{A}$'s conclusion is matched with $\acqb{T}$, too, forming thus succedent of $\mathtt{R}$'s conclusion. 
(Equivalently, the set of $\mathtt{R}$'s premisses is empty, but all  matches $\cc{P}_i {\; :^o \; } \acqb{T}$ 
occur on the left of $\longrightarrow$ as antecedents in $\mathtt{R}$'s conclusion.)

\begin{proof}[Proof of (the instance of) (L.=.Desc-E)]
$\newline$

\vspace{-20pt}
\begin{prooftree}
\AxiomC{$\Gamma \longrightarrow 
\acqs{=}^\iota
(\acqs{\rotatediota}^\iota (\lambda \cc{x} . \acqb{K}_\cc{w} (\cc{x} , \acqb{Fr}) ), \acqb{L}) {\; :^o\;} \acqb{T}$}

\RightLabel{($=$-E)}

\UnaryInfC{$\Gamma \longrightarrow 
\acqs{\rotatediota}^\iota (\lambda \cc{x} . \acqb{K}_\cc{w} (\cc{x} , \acqb{Fr}) ) {\; :^\iota \;} \acqb{L}$}

\RightLabel{($\rotatediota$-E)}

\UnaryInfC{$\Gamma \longrightarrow 
[\lambda \cc{x} . \acqb{K}_\cc{w} (\cc{x} , \acqb{Fr}) ] (\acqb{L}) {\; :^o \;} \acqb{T}$}

\RightLabel{($\beta$-CON)}

\UnaryInfC{$\Gamma \longrightarrow 
\acqb{K}_\cc{w} (\acqb{L} , \acqb{Fr})  {\; :^o\;} \acqb{T}$}
\end{prooftree}
\end{proof}

\vspace{-10pt}

\subsection{Case: Intensional Transitives}

\textit{Intensional transitive (verbs)} (ITV) 
are 
verbs such as ``\textsl{seek}'', ``\textsl{looking for}'', ``\textsl{wish [being  something]}''; they attribute a connection to \textit{agents} and  \textit{objects} of \textit{intentional attitudes}. 
In this paper, we will put aside all ITVs such as ``\textsl{believe}'', ``\textsl{know}'',  ``\textsl{wish [that]}'' whose sentential complements are sentences, forming thus sentences called \textit{reports} of \textit{propositional attitudes} (for their investigation, see e.g. our \cite{raclavsky2020}).

Since Church \cite{church1951} and Quine \cite{quine1956}, who discussed examples such as ``\textsl{Ponce de León searched for the Fountain of Youth}'', it's widely held that \textit{object terms} complementing ITVs only serve to indicate to which \textit{notion} (not material object) an agent is intentionally related to.
For not only that there's no point in e.g. looking for an object to which an agent is already consciously related to: sometimes the sought object under the description needn't to exist.

This gives rise to two widely accepted observations, (1) and (2).

\noindent
(1) \quad
{\it
Sentences with object terms in the scope of ITVs 
lack existential import as regards them.
}

\noindent
For example, the following type of arguments is obviously invalid (as indicated by {\scriptsize $----$}):\footnote{As noted by Church \cite{church1951}, Russell's
theory of descriptions blatantly fails here, since 
(unlike in the case of propositional-attitudes reports), only primary occurrence elimination of the description is possible here, so the 
unwelcome conclusion is derivable.}

\vspace{-5pt}
\begin{prooftree}
\AxiomC{\textsl{Ponce de Le\'{o}n searched for the Fountain of Youth.}}
\dashedLine
\UnaryInfC{\textsl{The Fountain of Youth exists.}}
\end{prooftree}

\noindent
(2) \quad
{\it
Substitution for object terms in the scope of ITVs fails.
}

\noindent
For an example, consider a so-called \textit{hidden description}
``\textsl{Endora}'' and:

\vspace{-5pt}
\begin{prooftree}
\AxiomC{\textsl{Ponce de Le\'{o}n seeks the Fountain of Youth.}}
\noLine
\UnaryInfC{\textsl{Endora is the Fountain of Youth.}}
\dashedLine
\UnaryInfC{\textsl{Ponce de Le\'{o}n seeks Endora.}}
\end{prooftree}

A natural choice for fulfilment of the requirements (1) -- (2)
is to employ Fregean \textit{modes of presentations} (\textit{senses}), explained in the 
Carnapian \cite{carnap1947} spirit as possible-worlds intensions called (say) \textit{individual concepts}. Explaining thus ITVs as denoting relations(-in-intension) between agents and the individual concepts.
Montague (e.g. \cite{montague1973}) is famous for this, but 
Tich\'{y}'s proposal (cf. e.g. \cite{tichy2004,tichy1988}) is more elaborated: his  offices (i) can be partial functions (such offices are \textit{unoccupied} in the respective worlds $\mathtt{w}$), (ii) they are functions from $\langle $possible world, time instant$\rangle$ couples (which we simplify in this paper), (iii) and systematically occur even in extensional contexts (via constructions $\cc{D}$ applied to $\cc{w}$, i.e. $\cc{D}_\cc{w}$).

It remains to explain why the above two arguments fail.
Let $\acqb{S} / \omega {\to} (\omega {\to} \iota)$ (searched for); 
$\acqb{FY} /\omega {\to} \iota$ (for simplicity); $\acqb{L} /\iota $ (León).
The (major) premiss of the arguments illustrating (1) and (2) expresses
\[ \cc{P} := \acqb{S}_\cc{w} (\acqb{L}, \acqb{FY}).\]
To $\cc{P}$, one cannot apply the type-theoretical version of (EG) that targets $\iota$-constructions such as $\acqb{FY}_\cc{w}$, 
since they're missing in $\cc{P}$. 
The only applicable version of (EG) (as regards object terms) targets constructions of individual offices, here $\acqb{FY}$. 
Then, one may only infer the uninformative 
$\acqs{\Sigma}^{\omega{\to}\iota} (  \acqs{=}^{\omega{\to}\iota} ( \cc{d},\acqb{FY} )) $, where $\cc{d}/ {\omega{\to}\iota}$,
expressed by 
``\textsl{There is an individual office of the Fountain of Youth}''.

Similarly for the argument illustrating point (2). Let us adjust (SI) (proved in \cite{tichy1986}) to two versions:

\begin{center}
\small
\begin{minipage}{.5\textwidth}
\begin{prooftree}
\AxiomC{$\Gamma \longrightarrow \cc{S}_{[\cc{D}_\cc{w}/\cc{x}]} {\; :^o\; } \acqb{T}$}
\AxiomC{\hspace{-10pt}$\Gamma \longrightarrow \acqs{=}^\iota (\cc{D}_\cc{w}, \cc{D}'_\cc{w}) {\; :^o\; } \acqb{T}$}
\RightLabel{(SI$_1$)}
\BinaryInfC{$\Gamma \longrightarrow \cc{S}_{[\cc{D}'_\cc{w}/\cc{x}]}
{\; :^o\; } \acqb{T} $}
\end{prooftree}
\end{minipage}
\begin{minipage}{.49\textwidth}
\begin{prooftree}
\AxiomC{$\Gamma \longrightarrow \cc{S}_{[\cc{D} /\cc{d}]} {\; :^o\; } \acqb{T}$}
\AxiomC{\hspace{-10pt}$\Gamma \longrightarrow \acqs{=}^{\omega{\to}\iota} (\cc{D}, \cc{D}') {\; :^o\; } \acqb{T}$}
\RightLabel{(SI$_2$)}
\BinaryInfC{$\Gamma \longrightarrow \cc{S}_{[\cc{D}'/\cc{d}]}
{\; :^o\; } \acqb{T} $}
\end{prooftree}
\end{minipage}
\end{center}

\noindent
The rule (SI$_1$), which uses a non-trivial \textit{co-reference} identity statement, 
cannot be applied in our case 
(for $\cc{P}$ doesn't contain $\acqb{FY}_\cc{w}$, but mere $\acqb{FY}$).
Only (SI$_2$) is applicable. But since according to  (SI$_2$)'s second premiss
``$\cc{D}$'' is \textit{co-denotative} with ``$\cc{D}'$'', one only changes the names of one and the same office $\obj{D}$ that is reportedly the object of the agent's attitude.

\subsection{Case: Strawsonian Reasoning about Existential Presuppositions}

By its design, $\mathsf{ND_{TT^*}}$ is powerful in capturing reasoning about partiality. It is then no surprise that it allows formalization of Strawson's famous views concerning 
\textit{existential presuppositions} (as indicated in \cite{raclavsky2011}). 
Recall that these are sentences ``$\cc{E}$'' ascribing existence to some object, if any, fitting the description ``$\cc{D}$''
that must be true in order
the sentences ``$\cc{S}$''  in which ``$\cc{D}$'' is in `\textit{referential position}' be either true, or false 
-- not without a truth value.
If, on the other hand, ``$\cc{E}$'' is false, the corresponding
``$\cc{S}$'' is without a true value (being \textit{gappy}).
We'll consider three arguments concerning ``$\cc{E}$''s.

\noindent
($\mathtt{A}_1$) 
\qquad
On p. 330 of Strawson's \cite{strawson1950}, we find two formulations of the following argument (let ``\textsl{the KF}'' abbreviate ``\textsl{the King of France}''):

\begin{prooftree}
\AxiomC{\textsl{The sentence ``{\rm The KF is (not) bald}'' has a truth value (true or false).}}
\UnaryInfC{\textsl{
The sentence ``{\rm The KF exists}'' is true}.}
\end{prooftree}

\noindent
Setting aside its meta-linguistic mode, assume the argument as an inference is captured by 

\begin{prooftree}
\AxiomC{$\Gamma \longrightarrow
[\acqs{\sim}] \cc{F} (\cc{D}_\cc{w}) {\; :^o\;} \cc{o}$}
\UnaryInfC{$\Gamma \longrightarrow
\acqs{\Sigma}^\iota  (\lambda \cc{x} . \acqs{=}^\iota ( \cc{D}_\cc{w}) , \cc{x}) {\; :^o\;} \acqb{T}$}
\end{prooftree}

The following derived rule of $\mathsf{ND_{TT^*}}$,
which we will call  the
\textit{Strawsonian Presupposition Rule $1$} (SPR1), covers it (recall that $\acqb{o}$ is either $\cc{o}, \acqb{T}$, or $\acqb{F}$).

\begin{theorem}
The following is a \textit{derived rule} of $\mathsf{ND}_\mathsf{TT^*}$:
\begin{prooftree}
\AxiomC{$\Gamma \longrightarrow
\cc{F} (\cc{D}_\cc{w}) {\; :^o\;} \acqb{o}$}
\RightLabel{(SPR1)}
\UnaryInfC{$\Gamma \longrightarrow
\acqs{\Sigma}^\iota  (\lambda \cc{x} . \acqs{=}^\iota ( \cc{D}_\cc{w}, \cc{x}) {\; :^o\;} \acqb{T}$}
\end{prooftree}
\end{theorem}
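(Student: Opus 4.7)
The plan is to leverage the fact that applications in $\mathsf{TT^*}$ are \emph{strict}: if $\cc{F}(\cc{D}_\cc{w})$ is $v$-proper in $\mathscr{M}$, then the argument $\cc{D}_\cc{w}$ must be $v$-proper as well, so whatever individual it constructs will witness the existential claim on the right of the conclusion. The rule (a-INST) is precisely the tool designed to license this downward propagation of $v$-properness inside the calculus.

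First, I would apply (a-INST) to the premise $\Gamma \longrightarrow \cc{F}(\cc{D}_\cc{w}) {\; :^o\;} \acqb{o}$, which lets me enrich $\Gamma$ with the two fresh-variable matches $\cc{F} {\; :^\phi\;} \cc{f}$ and $\cc{D}_\cc{w} {\; :^\iota\;} \cc{x}$ while trying to derive the target match. The first of these is irrelevant; the second one is the heart of the matter, since it says that $\cc{D}_\cc{w}$ $v$-constructs the individual $\cc{x}$. A direct application of ($=$-I) to this match yields $\acqs{=}^\iota(\cc{D}_\cc{w}, \cc{x}) {\; :^o\;} \acqb{T}$.

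Second, I need to reshape this into the form demanded by ($\Sigma$-I). Using (TM) to produce $\cc{x} {\; :^\iota\;} \cc{x}$ as a side premise, I would apply ($\beta$-EXP) to wrap the previous match as $[\lambda \cc{y} . \acqs{=}^\iota(\cc{D}_\cc{w}, \cc{y})](\cc{x}) {\; :^o\;} \acqb{T}$. One step of ($\Sigma$-I) then delivers $\acqs{\Sigma}^\iota(\lambda \cc{y} . \acqs{=}^\iota(\cc{D}_\cc{w}, \cc{y})) {\; :^o\;} \acqb{T}$, which is exactly the second premise of the (a-INST) step, closing that branch and so, via (a-INST), discharging the fresh assumptions and concluding the derivation. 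Alternatively, the already-derived rule (EG) cited at the end of Section 3 collapses these last two steps into a single inference.

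I do not expect a substantive obstacle. The only bookkeeping concern is the freshness of $\cc{f}$ and $\cc{x}$ introduced by (a-INST): they must not occur free in $\Gamma$, in $\cc{F}(\cc{D}_\cc{w})$, or clash with the bound variable $\cc{y}$ in the target, all of which is straightforward to arrange by renaming. Conceptually, the whole content of (SPR1) is that the $v$-properness encoded by the a-type label $\acqb{o}$ (covering $\cc{o}$, $\acqb{T}$, and $\acqb{F}$ alike) already contains the existential presupposition as an implicit premise, and that strictness of application plus (a-INST) make that presupposition syntactically recoverable.
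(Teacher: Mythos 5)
Your proposal is correct and follows essentially the same route as the paper's own proof: the paper also closes with (a-INST) applied to the premiss $\Gamma \longrightarrow \cc{F}(\cc{D}_\cc{w}) {\; :^o\;} \acqb{o}$, having first derived, under the added assumptions $\cc{D}_\cc{w} {\; :^\iota\;} \cc{x}$ (via (AX)) and $\cc{F} {\; :^{\iota\to o}\;} \cc{f}$ (added by (WR)), the match $\acqs{\Sigma}^\iota(\lambda \cc{x}.\acqs{=}^\iota(\cc{D}_\cc{w},\cc{x})) {\; :^o\;} \acqb{T}$ by exactly your chain (=-I), (TM), ($\beta$-EXP), ($\Sigma$-I). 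Your renaming of the bound variable and the remark that (EG) could compress the last two steps are only cosmetic differences.
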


%%%%%%%%%%%%%%

\begin{proof}
We begin with an assumption introduced by (AX) that fits the premiss that $\cc{F} (\cc{D}_\cc{w}) $ is $v$-proper:

\vspace{-15pt}

\begin{prooftree}

\AxiomC{}
\RightLabel{(AX)}
\UnaryInfC{$\Gamma , \cc{D}_\cc{w} {\; :^\iota\;} \cc{x}
\longrightarrow 
\cc{D}_\cc{w} {\; :^\iota\;} \cc{x}
$}

\RightLabel{(=-I)}
\UnaryInfC{$\Gamma , \cc{D}_\cc{w} {\; :^\iota\;} \cc{x}
\longrightarrow 
\acqs{=}^\iota (\cc{D}_\cc{w} , \cc{x}) 
{\; :^o\;} \acqb{T}
$}
\AxiomC{}
\RightLabel{(TM)}
\UnaryInfC{$\Gamma 
\longrightarrow \cc{x} {\; :^\iota \;} \cc{x} $}

\RightLabel{($\beta$-EXP)}
\BinaryInfC{$\Gamma , \cc{D}_\cc{w} {\; :^\iota\;} \cc{x}
\longrightarrow 
[\lambda \cc{x} . \acqs{=}^\iota (\cc{D}_\cc{w} , \cc{x})] (\cc{x}) 
{\; :^o\;} \acqb{T}
$}

\RightLabel{($\Sigma$-I)}
\UnaryInfC{
$\Gamma , \cc{D}_\cc{w} {\; :^\iota\;} \cc{x}
\longrightarrow 
\acqs{\Sigma}^\iota (\lambda \cc{x} .
\acqs{=}^\iota (\cc{D}_\cc{w} , \cc{x}) )
{\; :^o\;} \acqb{T}$}

\RightLabel{(WR)}
\UnaryInfC{$\Gamma , \cc{D}_\cc{w} {\; :^\iota\;} \cc{x} , \cc{F} {\; :^{\iota{\to}o}\;}  \cc{f}
\longrightarrow 
\acqs{\Sigma}^\iota (\lambda \cc{x} .
\acqs{=}^\iota (\cc{D}_\cc{w} , \cc{x}) )
{\; :^o\;} \acqb{T}$}

\RightLabel{(WR)}
\AxiomC{$\Gamma \longrightarrow 
\cc{F} (\cc{D}_\cc{w}) {\; :^o\;} \acqb{o}$
}

\RightLabel{(a-INST)}
\BinaryInfC{$\Gamma 
\longrightarrow 
\acqs{\Sigma}^\iota (\lambda \cc{x} .
\acqs{=}^\iota (\cc{D}_\cc{w} , \cc{x}) )
{\; :^o\;} \acqb{T}$}

\end{prooftree}

\end{proof}

\vspace{-10pt}

\noindent
($\mathtt{A}_2$)  \qquad
On p. 330 of Strawson's \cite{strawson1950}, one also finds an argument  resembling to:

\begin{prooftree}
\AxiomC{\textsl{The sentence ``{\rm The KF doesn't exist}'' is true.}}
\UnaryInfC{\textsl{The sentence ``{\rm The KF is bald}'' is without a truth value}.}
\end{prooftree}

\noindent
The argument can be seen as justified by (what we call)
\textit{Strawsonian Presupposition Rule $2$} (SPR2).\footnote{To really justify the above argument, one should derive the conclusion $\Gamma \longrightarrow  \acqs{\sim} \acqs{\Sigma}^\iota 
(\lambda \cc{o} . 
\acqs{=}^\iota ( \cc{F} (\cc{D}_\cc{w}), \cc{o}) )
{\; :^o\;} \acqb{T}$, using (SPR3) and (L-$\sim$.iii) on (SPR2)'s actual conclusion.
}

\begin{theorem}
The following is a \textit{derived rule} of $\mathsf{ND}_\mathsf{TT^*}$:
\begin{prooftree}
\AxiomC{$\Gamma \longrightarrow 
\acqs{\sim} (\acqs{\Sigma}^\iota  (\lambda \cc{x} . \acqs{=}^\iota (  \cc{D}_\cc{w}, \cc{x}) )) {\; :^o\;} \acqb{T}$}
\RightLabel{(SPR2)}
\UnaryInfC{$\Gamma \longrightarrow  \cc{F} (\cc{D}_\cc{w}) {\; :^o\;} \acqs{\bot}$}
\end{prooftree}
\end{theorem}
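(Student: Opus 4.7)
The goal is to show $\cc{F}(\cc{D}_\cc{w}){\;:^o\;}\acqs{\bot}$ from the premise that existence of $\cc{D}_\cc{w}$ fails. The obvious semantic content is that the negated $\Sigma$-statement forces $\cc{D}_\cc{w}$ to be $v$-improper, after which strictness of application propagates improperness to $\cc{F}(\cc{D}_\cc{w})$. Proof-theoretically, the cleanest way to encode this is a case split on the properness of $\cc{D}_\cc{w}$ via (EXH), with $\cc X := \cc{D}_\cc{w}$ and $\tau := \iota$.

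In the branch where we assume $\cc{D}_\cc{w}{\;:^\iota\;}\cc{x}$, I would essentially replay the derivation of (SPR1) given just above: start from (AX) on $\cc{D}_\cc{w}{\;:^\iota\;}\cc{x}$, apply ($=$-I) to get $\acqs{=}^\iota(\cc{D}_\cc{w},\cc{x}){\;:^o\;}\acqb{T}$, then use ($\beta$-EXP) together with (TM) on $\cc{x}{\;:^\iota\;}\cc{x}$ to introduce the abstraction $[\lambda\cc{x}.\acqs{=}^\iota(\cc{D}_\cc{w},\cc{x})](\cc{x}){\;:^o\;}\acqb{T}$, and finally ($\Sigma$-I) to obtain $\acqs{\Sigma}^\iota(\lambda\cc{x}.\acqs{=}^\iota(\cc{D}_\cc{w},\cc{x})){\;:^o\;}\acqb{T}$. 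The premise of (SPR2), combined with the derived rule (L-$\sim$.iii), yields the same construction matched with $\acqb{F}$. These two matches differ only by the acquisitions $\acqb{T}$ and $\acqb{F}$, so they are patently incompatible; (EFQ) then licences the arbitrary conclusion $\cc{F}(\cc{D}_\cc{w}){\;:^o\;}\acqs{\bot}$ (weakening the premise into $\Gamma$ as needed to line up contexts).

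In the branch where we assume $\cc{D}_\cc{w}{\;:^\iota\;}\acqs{\bot}$, the derived rule (L-APP.$\bot$) applies immediately, since the improper argument forces the whole application $\cc{F}(\cc{D}_\cc{w})$ to be $v$-improper. This gives $\cc{F}(\cc{D}_\cc{w}){\;:^o\;}\acqs{\bot}$ directly. Combining both branches with (EXH) discharges the auxiliary assumption on $\cc{D}_\cc{w}$ and delivers the target sequent.

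\textbf{Expected obstacle.} Nothing deep is really at stake, but two bookkeeping points need care. First, the variable $\cc{x}$ used in the assumption $\cc{D}_\cc{w}{\;:^\iota\;}\cc{x}$ is the same symbol that binds inside $\lambda\cc{x}.\acqs{=}^\iota(\cc{D}_\cc{w},\cc{x})$; by the standing notational agreement after Def.~2 the $\cc{x}$ in (EXH) must be fresh with respect to $\Gamma$ and the end match, so I would either rename the bound variable in the $\Sigma$-construction or pick the freshness so the (SPR1)-style moves go through cleanly. Second, the match $\cc{F}(\cc{D}_\cc{w}){\;:^o\;}\acqs{\bot}$ is a b-type match, and (EFQ) was stated for arbitrary $\mathtt{M}$, so it does apply; similarly (EXH)'s conclusion match is unconstrained. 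Once these points are checked the derivation is a short two-branch tree.
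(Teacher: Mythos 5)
Your proposal follows the paper's own strategy: an (EXH) case split on the properness of $\cc{D}_\cc{w}$, where the proper branch replays the (SPR1) derivation to obtain $\acqs{\Sigma}^\iota(\lambda\cc{x}.\acqs{=}^\iota(\cc{D}_\cc{w},\cc{x}))\;{:^o}\;\acqb{T}$, contradicts it with what follows from the negated premiss, and closes by (EFQ), while the improper branch is handled by (L-APP.$\bot$). The only structural difference is cosmetic: the paper has (EFQ) conclude $\cc{D}_\cc{w}\;{:^\iota}\;\acqs{\bot}$ in the proper branch, so that (EXH) yields $\Gamma\longrightarrow\cc{D}_\cc{w}\;{:^\iota}\;\acqs{\bot}$ and (L-APP$^\bot$.ii) is applied once at the very end, whereas you conclude the target match inside each branch; both are licensed because (EFQ)'s conclusion match is arbitrary. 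One step you wave through deserves explicit care: (L-$\sim$.iii) is stated for a variable $\cc{o}$, not for an arbitrary $o$-construction, so it cannot be applied directly to $\acqs{\sim}(\acqs{\Sigma}^\iota(\lambda\cc{x}.\acqs{=}^\iota(\cc{D}_\cc{w},\cc{x})))\;{:^o}\;\acqb{T}$ to get the compound construction matched with $\acqb{F}$. The paper fills this with an instantiation detour that occupies most of its left branch: assume $\mathtt{M}_1:=\acqs{\Sigma}^\iota(\lambda\cc{x}.\acqs{=}^\iota(\cc{D}_\cc{w},\cc{x}))\;{:^o}\;\cc{o}$, pass to $\acqs{\sim}(\cc{o})\;{:^o}\;\acqb{T}$ by (a-SUB), apply (L-$\sim$.iii), transfer $\acqb{F}$ back to the compound construction via ($\beta$-EXP)/(a-SUB)/($\beta$-CON), and finally discharge $\mathtt{M}_1$ by ($\Sigma$-INST). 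Your plan is correct once that detour (or an analogous INST-style instantiation) is inserted; your remarks on variable freshness and on the applicability of (EFQ)/(EXH) to b-type matches are fine and agree with the paper's handling.
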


\noindent

\vspace{-5pt}
\begin{proof}
To simplify the proof presentation, let's first state auxiliary matches $\mathtt{M}_1, \mathtt{M}_1$ and derivation $\mathtt{D}_1$:

\vspace{-5pt}
\noindent
\begin{center}
\noindent
\begin{minipage}{.35\textwidth}
$\mathtt{M}_1 :=  \acqs{\Sigma}^\iota  (\lambda \cc{x} . \acqs{=}^\iota ( \cc{D}_\cc{w}, \cc{x} ) ) {\; :^o\;} 
\cc{o} $
\end{minipage}
\begin{minipage}{.20\textwidth}
$\mathtt{M}_2 := \cc{D}_\cc{w} {\; :^\iota \;} \cc{x}$
\end{minipage}
\begin{minipage}{.42\textwidth}
\begin{prooftree}
\small
\AxiomC{}
\LeftLabel{$\mathtt{D}_1:=$}
\RightLabel{(AX)}
\UnaryInfC{$\Gamma  , \mathtt{M}_1
\longrightarrow 
\acqs{\Sigma}^\iota  (\lambda \cc{x} . \acqs{=}^\iota ( \cc{D}_\cc{w}, \cc{x} ) ) {\; :^o\;} \cc{o}$}
\end{prooftree}
\end{minipage}
\end{center}

\vspace{-0pt}
%%%%%%%%%%%%%%%%
\noindent
Derivation $\mathtt{D}$. 
Now we  develop the left branch $\mathtt{D}$ of the whole proof tree:
\vspace{-10pt}

\noindent
\begin{prooftree}
\small
%LEVA CAST

\AxiomC{$\Gamma  
\longrightarrow 
\acqs{\sim} (\acqs{\Sigma}^\iota  (\lambda \cc{x} . \acqs{=}^\iota ( \cc{D}_\cc{w}, \cc{x} ) )) {\; :^o\;} \acqb{T}$}

\RightLabel{(WR)}
\UnaryInfC{$\Gamma  , \mathtt{M}_1
\longrightarrow 
\acqs{\sim} (\acqs{\Sigma}^\iota  (\lambda \cc{x} . \acqs{=}^\iota ( \cc{D}_\cc{w}, \cc{x} ) )) {\; :^o\;} \acqb{T}$}

\AxiomC{$\mathtt{D}_1$}

%\LeftLabel{$\mathtt{D}_1 := $}
\RightLabel{(a-SUB)}
\BinaryInfC{$\Gamma  , \mathtt{M}_1
\longrightarrow 
\acqs{\sim} (\cc{o}) {\; :^o\;} 
\acqb{T}$
}

\RightLabel{(L-$\sim$.iii)}
\UnaryInfC{$\Gamma    , \mathtt{M}_1
\longrightarrow 
\cc{o} {\; :^o\;} 
\acqb{F}
$}

\AxiomC{}
\RightLabel{(TM)}
\UnaryInfC{$\Gamma   
\longrightarrow 
\cc{o} {\; :^o\;} 
\cc{o}
$}

\RightLabel{(WR)}
\UnaryInfC{$\Gamma  , \mathtt{M}_1
\longrightarrow 
\cc{o} {\; :^o\;} 
\cc{o}
$}

\RightLabel{($\beta$-EXP)}
\BinaryInfC{$\Gamma , \mathtt{M}_1
\longrightarrow 
[\lambda \cc{o} . \cc{o}] (\cc{o}) {\; :^o\;} 
\acqb{F}
$}

\AxiomC{$\mathtt{D}_1$}

\RightLabel{(a-SUB)}
\BinaryInfC{$\Gamma , \mathtt{M}_1
\longrightarrow 
[\lambda \cc{o} . \cc{o}] (
\acqs{\Sigma}^\iota  (\lambda \cc{x} . \acqs{=}^\iota ( \cc{D}_\cc{w}, \cc{x} ) ) 
)
{\; :^o\;} 
\acqb{F}
$}

\RightLabel{($\beta$-CON)}
\UnaryInfC{$\Gamma , \mathtt{M}_1
\longrightarrow 
\acqs{\Sigma}^\iota  (\lambda \cc{x} . \acqs{=}^\iota ( \cc{D}_\cc{w}, \cc{x} ) ) {\; :^o\;} 
\acqb{F}
$}

\RightLabel{($\Sigma$-INST)}
\UnaryInfC{$\Gamma 
\longrightarrow 
\acqs{\Sigma}^\iota  (\lambda \cc{x} . \acqs{=}^\iota ( \cc{D}_\cc{w}, \cc{x} ) ) {\; :^o\;} 
\acqb{F}
$}

\RightLabel{(WR)}
\UnaryInfC{$\Gamma , \mathtt{M}_2
\longrightarrow 
\acqs{\Sigma}^\iota  (\lambda \cc{x} . \acqs{=}^\iota ( \cc{D}_\cc{w}, \cc{x} ) ) {\; :^o\;} 
\acqb{F}
$}

\end{prooftree}

\vspace{-0pt}
\noindent
In the middle branch, 
an assumption \textit{per absurdum} that ``$\cc{D}$'' is referring in $\mathtt{w}$ is introduced by (AX):

\vspace{-15pt}

\begin{prooftree}
\small

%LEVA CAST
\AxiomC{$\mathtt{D}$}

%PRAVA CAST
\AxiomC{}

\RightLabel{(AX)}
\UnaryInfC{$\Gamma , \mathtt{M}_2
\longrightarrow 
\cc{D}_\cc{w} {\; :^\iota \;} 
\cc{x} $}

\RightLabel{($=$-I)}
\UnaryInfC{$\Gamma , \mathtt{M}_2
\longrightarrow 
\acqs{=}^\iota ( \cc{D}_\cc{w}, \cc{x} )
{\; :^o \;} \acqb{T} $}

%VNORENA
\AxiomC{}
\RightLabel{(TM)}
\UnaryInfC{$\Gamma \longrightarrow 
\cc{x} {\; :^\iota \;} \cc{x} $}

\RightLabel{(WR)}
\UnaryInfC{$\Gamma , \mathtt{M}_2
\longrightarrow 
\cc{x} {\; :^\iota \;} \cc{x} $}

%end VNORENA

\RightLabel{($\beta$-EXP)}
\BinaryInfC{$\Gamma , \mathtt{M}_2
\longrightarrow 
[\lambda \cc{x} . \acqs{=}^\iota ( \cc{D}_\cc{w}, \cc{x} ) ]
(\cc{x} ) {\; :^o\;} \acqb{T} $}

\RightLabel{($\Sigma$-I)}
\UnaryInfC{$\Gamma , \mathtt{M}_2
\longrightarrow 
\acqs{\Sigma}^\iota  (\lambda \cc{x} . \acqs{=}^\iota ( \cc{D}_\cc{w}, \cc{x} ) ) {\; :^o\;} 
\acqb{T}
$}

%konec PRAVA  CAST

\RightLabel{(EFQ)}
\BinaryInfC{$\Gamma , \mathtt{M}_2
\longrightarrow 
\cc{D}_\cc{w} {\; :^\iota \;} 
\acqs{\bot} $}

\AxiomC{}
\RightLabel{(AX)}
\UnaryInfC{$\Gamma  , \cc{D}_\cc{w} {\; :^\iota \;} 
\acqs{\bot} 
\longrightarrow 
\cc{D}_\cc{w} {\; :^\iota \;} 
\acqs{\bot}$}

\RightLabel{(EXH)}
\BinaryInfC{$\Gamma 
\longrightarrow 
\cc{D}_\cc{w} {\; :^\iota \;} 
\acqs{\bot} $}

\RightLabel{(L-APP$^\bot$.ii)}
\UnaryInfC{$\Gamma \longrightarrow 
\cc{F} (\cc{D}_\cc{w})  
{\; :^o \;} 
\acqs{\bot} $}

\end{prooftree}
\end{proof}

\vspace{-10pt}

\noindent
($\mathtt{A}_3$) \qquad
On p. 331 of Strawson's \cite{strawson1950},
we find an argument quite fitting the rule (L-APP$^\bot$.ii). Let us rather study a justification of an argument which looks
like an inverse of $\mathtt{A}_2$.

\begin{prooftree}
\AxiomC{\textsl{The sentence ``{\rm The KF is (not) bald}'' is without a truth value.}}
\UnaryInfC{\textsl{The sentence ``{\rm The KF exists}'' is false}.}
\end{prooftree}

%%%%
\noindent
It can be seen as justified by (what we call) the \textit{Strawsonian Presupposition Rule $3$} (SPR3).

\begin{theorem}
The following is a \textit{derived rule} of $\mathsf{ND}_\mathsf{TT^*}$:

\begin{prooftree}
\AxiomC{$\Gamma \longrightarrow 
\cc{F} (\cc{D}_\cc{w}) {\; :^o\;} \acqs{\bot}$}
\AxiomC{$\Gamma \longrightarrow 
\cc{F} (\cc{y}) {\; :^o\;} \cc{o}$}
\RightLabel{(SPR3)}
\BinaryInfC{$\Gamma \longrightarrow 
\acqs{\Sigma}^\iota  (\lambda \cc{x} . \acqs{=}^\iota ( \cc{x} , \cc{D}_\cc{w}) ) {\; :^o\;} \acqb{F}$}
\end{prooftree}

\end{theorem}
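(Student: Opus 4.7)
My plan is to case-split on the properness of $\cc{D}_\cc{w}$ via (EXH), choosing the fresh witness variable of the rule to be $\cc{y}$—the free variable already occurring in premise~2, which by the standard freshness convention is not free in $\Gamma$, in $\cc{D}_\cc{w}$, or in the goal $\acqs{\Sigma}^\iota(\lambda \cc{x} . \acqs{=}^\iota(\cc{x},\cc{D}_\cc{w})) :^o \acqb{F}$. This produces two subgoals, one with $\cc{D}_\cc{w} :^\iota \cc{y}$ and one with $\cc{D}_\cc{w} :^\iota \acqs{\bot}$.

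In the proper case $\cc{D}_\cc{w} :^\iota \cc{y}$, premise~2 furnishes $\cc{F}(\cc{y}) :^o \cc{o}$ (the variable-argument form), and a single application of (a-SUB.ii) substituting $\cc{D}_\cc{w}$ for $\cc{y}$ yields $\cc{F}(\cc{D}_\cc{w}) :^o \cc{o}$. This match is patently incompatible with premise~1's $\cc{F}(\cc{D}_\cc{w}) :^o \acqs{\bot}$, so after weakening to align the contexts, (EFQ) closes the branch by delivering the desired conclusion.

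In the improper case $\cc{D}_\cc{w} :^\iota \acqs{\bot}$, the strategy mirrors the indirect reasoning already exemplified in the proof of Theorem~2. First I use ($\lambda$-INST) to introduce the auxiliary match $\lambda \cc{x} . \acqs{=}^\iota(\cc{x},\cc{D}_\cc{w}) :^{\iota{\to}o} \cc{c}$, and then ($\Sigma$-INST) to introduce $\acqs{\Sigma}^\iota(\cc{c}) :^o \cc{o}'$ for fresh $\cc{c}$ and $\cc{o}'$; (a-SUB.ii) combines the two into $\acqs{\Sigma}^\iota(\lambda \cc{x} . \acqs{=}^\iota(\cc{x},\cc{D}_\cc{w})) :^o \cc{o}'$. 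Now (RA) splits on the truth value of $\cc{o}'$. In the $\cc{o}' :^o \acqb{F}$ sub-branch the target $\acqs{\Sigma}^\iota(\cdots) :^o \acqb{F}$ follows by the same $\beta$-EXP/(a-SUB)/$\beta$-CON substitution trick deployed in the proof of Theorem~2. In the $\cc{o}' :^o \acqb{T}$ sub-branch, the Sigma becomes $\acqs{\Sigma}^\iota(\cdots) :^o \acqb{T}$; applying ($\Sigma$-E) followed by ($\beta$-CON) extracts $\acqs{=}^\iota(\acqb{x}_0,\cc{D}_\cc{w}) :^o \acqb{T}$ for some fresh $\acqb{x}_0$, and a subsequent use of (a-INST) on this application supplies a further match $\cc{D}_\cc{w} :^\iota \cc{z}$ for fresh $\cc{z}$, patently incompatible with the branch's $\cc{D}_\cc{w} :^\iota \acqs{\bot}$; (EFQ) then closes the branch.

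The principal technical obstacle is the improper case, because $\mathsf{ND_{TT^*}}$ offers no direct rule concluding $\acqs{\Sigma}^\iota(\cdots) :^o \acqb{F}$ from the $v$-improperness of a subconstruction. The workaround — exposing a truth-value variable through the chained ($\lambda$-INST)/($\Sigma$-INST), case-splitting with (RA), and ruling out the $\acqb{T}$ alternative by extracting a proper referent for $\cc{D}_\cc{w}$ through (a-INST) — is delicate, and fresh-variable bookkeeping across the several (AX), ($\beta$-CON), (a-SUB), and (a-INST) invocations is where the derivation demands the most care.
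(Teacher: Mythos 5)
Your derivation is essentially correct, but it is organized differently from the paper's. The paper factors (SPR3) into two lemmas and then chains them: (L-Desc$^\bot$-APP) first turns your two premisses into $\Gamma \longrightarrow \cc{D}_\cc{w} {\,:^\iota\,} \acqs{\bot}$ (this lemma is exactly your proper/improper (EXH) split with the (WR)/(a-SUB)/(EFQ) contradiction in the proper branch, except that its conclusion is only the improperness of $\cc{D}_\cc{w}$, the improper branch being a trivial (AX)); then (L-$\Sigma$.Desc$^\bot$-APP) converts that improperness into $\acqs{\Sigma}^\iota(\lambda\cc{x}.\acqs{=}^\iota(\cc{x},\cc{D}_\cc{w})) {\,:^o\,} \acqb{F}$. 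You instead run a single top-level (EXH) with the final $\acqb{F}$-conclusion as the goal of both branches, so the proper branch closes by (EFQ) directly and all the real work sits in the improper branch; the content is the same, but you lose the reusable intermediate rules. Inside the improper case you also deviate in detail: the paper's lemma assumes the match $\acqs{\Sigma}^\iota(\lambda\cc{x}.\ldots) {\,:^o\,} \cc{o}$ outright and discharges it at the end by ($\Sigma$-INST) applied to the compound $\lambda$-argument, whereas you first name the abstraction by ($\lambda$-INST) and only then invoke ($\Sigma$-INST) on the variable $\cc{c}$, recombining with (a-SUB.ii) -- this is more faithful to the stated form of ($\Sigma$-INST), at the cost of extra bookkeeping and care with the discharge order of $\cc{c}$ and $\cc{o}'$. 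Likewise, where the paper refutes the ``true'' alternative via ($\beta$-CON) and ($=$-E) (silently using the argument order $\acqs{=}^\iota(\cc{D}_\cc{w},\cc{x})$), your use of (a-INST) to extract a properness match for $\cc{D}_\cc{w}$ and then (EFQ) works for the order $\acqs{=}^\iota(\cc{x},\cc{D}_\cc{w})$ actually appearing in the theorem, so it sidesteps the mismatch with ($=$-E)'s stated pattern. One point to keep explicit when writing it out: your choice of $\cc{y}$ as the (EXH) eigenvariable is legitimate (and mirrors the paper's own Lemma on this point) only because $\cc{y}$ is not free in $\Gamma$, in $\cc{D}_\cc{w}$, or in the $\Sigma$-conclusion; it is free in the second hypothesis sequent, but that sequent is outside the (EXH) instance, so the side condition is not violated.
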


\noindent
\textit{Remark.} In (SPR3)'s second premiss, we require $\cc F$ $v$-constructs  a \textit{total} characteristic function (in $\mathscr{M}$). For in cases when $\cc F$ $v$-constructed a partial characteristic function (in $\mathscr{M}$), the whole application $\cc{F} (\cc{D}_\cc{w}) $ would also be $v$-improper (in $\mathscr{M}$), so we couldn't derive (SPR)'s conclusion for sure.\footnote{To justify the above argument, the first premiss of the rule should be converted to $\Gamma \longrightarrow 
\acqs{\sim} \acqs{\Sigma}^\iota  (\lambda \cc{x} . \acqs{=}^\iota ( \cc{x} , \cc{D}_\cc{w}) ) {\; :^o\;} \acqb{T}$.}

(SPR3)'s proof (occurring in the end of this section) becomes simple, once two derived rules are established.

\begin{lemma}
The following is a \textit{derived rule} of $\mathsf{ND_{TT^*}}$:

\begin{prooftree}
\AxiomC{$\Gamma \longrightarrow 
\cc{F} (\cc{D}_\cc{w}) {\; :^o\;} \acqs{\bot}$}
\AxiomC{$\Gamma \longrightarrow 
\cc{F} (\cc{y}) {\; :^o\;} \cc{o}$}
\RightLabel{(L-Desc$^\bot$-APP)}
\BinaryInfC{$\Gamma \longrightarrow  \cc{D}_\cc{w} {:^\iota \;} \acqs{\bot} $}
\end{prooftree}

\end{lemma}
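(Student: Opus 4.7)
The plan is to proceed by case analysis on whether $\cc{D}_\cc{w}$ is $v$-proper, via the exhaustation rule (EXH) applied to $\cc{D}_\cc{w}$. The improper branch yields the conclusion immediately by (AX). On the proper branch I will use the second premiss --- whose free variable $\cc{y}$ is precisely what forces $\cc{F}$ to be total, as the Remark accompanying the statement makes clear --- to derive that $\cc{F}(\cc{D}_\cc{w})$ would itself have to be $v$-proper, which contradicts the first premiss, so that (EFQ) then furnishes the desired match.

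In more detail: apply (EXH) on $\cc{D}_\cc{w}$, choosing the fresh auxiliary variable supplied by the rule to be $\cc{y}$ itself. This is permissible provided $\cc{y}$ is taken fresh with respect to $\Gamma$ and $\cc{D}_\cc{w}$, which is in any case the standard convention that makes the second premiss genuinely assert totality of $\cc{F}$. The improper branch $\Gamma, \cc{D}_\cc{w} {\; :^\iota \;} \acqs{\bot} \longrightarrow \cc{D}_\cc{w} {\; :^\iota \;} \acqs{\bot}$ closes by (AX). On the proper branch $\Gamma, \cc{D}_\cc{w} {\; :^\iota \;} \cc{y} \longrightarrow \cc{D}_\cc{w} {\; :^\iota \;} \acqs{\bot}$, I would weaken both original premisses into the enlarged context and then apply (a-SUB.ii) to the weakened second premiss $\cc{F}(\cc{y}) {\; :^o \;} \cc{o}$ using the assumption $\cc{D}_\cc{w} {\; :^\iota \;} \cc{y}$ (obtained from (AX)) to conclude $\cc{F}(\cc{D}_\cc{w}) {\; :^o \;} \cc{o}$. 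This match and the weakened first premiss $\cc{F}(\cc{D}_\cc{w}) {\; :^o \;} \acqs{\bot}$ are patently incompatible in the sense required by (EFQ), so (EFQ) delivers $\cc{D}_\cc{w} {\; :^\iota \;} \acqs{\bot}$, and (EXH) closes the derivation.

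The only genuine obstacle is the bureaucratic one of aligning the auxiliary variable of (EXH) with the variable $\cc{y}$ of the second premiss, so that (a-SUB.ii) fires directly without any intervening $\alpha$-renaming or detour through $\lambda$-abstraction and $\beta$-conversion. Once this alignment is secured by the freshness convention on $\cc{y}$, the remaining steps are a short sequence of structural and substitution moves and closely parallel the Case-B / (EFQ) maneuver already used in the proof of Theorem~2.
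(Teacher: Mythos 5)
Your proposal is correct and follows essentially the same route as the paper's own proof: (EXH) on $\cc{D}_\cc{w}$ with the variable $\cc{y}$, the improper branch closed by (AX), and on the proper branch weakening both premisses, using (a-SUB) with the assumption $\cc{D}_\cc{w} {\;:^\iota\;} \cc{y}$ to obtain $\cc{F}(\cc{D}_\cc{w}) {\;:^o\;} \cc{o}$, which is patently incompatible with $\cc{F}(\cc{D}_\cc{w}) {\;:^o\;} \acqs{\bot}$, so (EFQ) yields the conclusion. The variable-alignment worry you flag is handled in the paper exactly as you suggest, by taking the (EXH) variable to be $\cc{y}$ itself.
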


\begin{proof} 
An assumption \textit{per absurdum} is introduced by (AX) 
in the right middle branch. First, auxiliary derivations $\mathtt{D}_1$ and $\mathtt{D}_2$ are stated:

\vspace{3pt}

\begin{center}
\begin{minipage}{.4\textwidth}
\vspace{-10pt}
\begin{prooftree}
\AxiomC{$\Gamma \longrightarrow 
\cc{F} (\cc{D}_\cc{w}) {\; :^o\;} \acqs{\bot}$}
\LeftLabel{$\mathtt{D}_1 :=$}
\RightLabel{(WR)}
\UnaryInfC{$\Gamma , \cc{D}_\cc{w} {:^\iota \;} \cc{y} \longrightarrow 
\cc{F} (\cc{D}_\cc{w}) {\; :^o\;} \acqs{\bot}$}
\end{prooftree}
\end{minipage}
\begin{minipage}{.4\textwidth}
\begin{prooftree}
\AxiomC{}
\LeftLabel{$\mathtt{D}_2 :=$}
\RightLabel{(AX)}
\UnaryInfC{$\Gamma , \cc{D}_\cc{w} {:^\iota \;} \acqs{\bot} 
\longrightarrow 
\cc{D}_\cc{w} {:^\iota \;} \acqs{\bot} $}
\end{prooftree}
\end{minipage}
\end{center}

\vspace{-10pt}

\begin{center}

\begin{prooftree}
% ZCELA VLEVO
\AxiomC{$\mathtt{D}_1$}

% HLAVNI CAST

\AxiomC{$\Gamma \longrightarrow 
\cc{F} (\cc{y}) {\; :^o\;} \cc{o}$}
\RightLabel{(WR)}
\UnaryInfC{$\Gamma , \cc{D}_\cc{w} {:^\iota \;} \cc{y} 
\longrightarrow 
\cc{F} (\cc{y}) {\; :^o\;} \cc{o}$}

%RIGHT MIDDLE
\AxiomC{}
\RightLabel{(AX)}
\UnaryInfC{$\Gamma , \cc{D}_\cc{w} {:^\iota \;} \cc{y}
\longrightarrow 
\cc{D}_\cc{w} {:^\iota \;} \cc{y}$}

\RightLabel{(a-SUB)}
\BinaryInfC{$\Gamma , \cc{D}_\cc{w} {:^\iota \;} \cc{y}
\longrightarrow 
\cc{F} (\cc{D}_\cc{w}) {:^\iota \;} \cc{o}$}

\RightLabel{(EFQ)}
\BinaryInfC{$\Gamma , \cc{D}_\cc{w} {:^\iota \;} \cc{y} 
\longrightarrow \cc{D}_\cc{w} {:^\iota \;} \acqs{\bot} $}

% ZCELA VPRAVO
\AxiomC{$\mathtt{D}_2$}

\RightLabel{(EXH)}
\BinaryInfC{$\Gamma \longrightarrow  \cc{D}_\cc{w} {:^\iota \;} \acqs{\bot} $}

\end{prooftree}

\end{center}
\end{proof}

\vspace{-10pt}
\begin{lemma}
The following is a \textit{derived rule} of $\mathsf{ND_{TT^*}}$:

\vspace{-5pt}
\begin{prooftree}
\AxiomC{$\Gamma \longrightarrow
\cc{D}_\cc{w} {\; :^\iota\;} \acqs{\bot}
$}
\RightLabel{(L-$\Sigma$.Desc$^\bot$-APP)}
\UnaryInfC{$\Gamma \longrightarrow 
\acqs{\Sigma}^\iota  (\lambda \cc{x} . \acqs{=}^\iota ( \cc{x} , \cc{D}_\cc{w}) ) {\; :^o\;} \acqb{F}$}
\end{prooftree}
\end{lemma}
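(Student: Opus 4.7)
Semantically the result is transparent: because $\cc{D}_\cc{w}$ is $v$-improper and $\acqs{=}^\iota$ is strict, the body $\acqs{=}^\iota(\cc{x},\cc{D}_\cc{w})$ of the abstracted class is $v$-improper on every $\cc{x}$-modification of $v$, so $\lambda\cc{x}.\acqs{=}^\iota(\cc{x},\cc{D}_\cc{w})$ $v$-constructs the empty (undefined-everywhere) characteristic function in $\mathscr{D}_{\iota{\to}o}$, which $\acqs{\Sigma}^\iota$ sends to $\acqb{F}$ by its defining clause for ``all other functions''. The derivation must reflect this by showing that any purported witness for the $\Sigma$ would force $\cc{D}_\cc{w}$ to be proper, contradicting the premise.

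My first move is to name the two compound subconstructions. I apply ($\lambda$-INST) upward to add the hypothesis $\lambda\cc{x}.\acqs{=}^\iota(\cc{x},\cc{D}_\cc{w}){\,:^{\iota{\to}o}\,}\cc{c}$ to the context, and then ($\Sigma$-INST) to add $\acqs{\Sigma}^\iota(\cc{c}){\,:^{o}\,}\cc{o}$. The remaining task is to derive $\acqs{\Sigma}^\iota(\cc{c}){\,:^{o}\,}\acqb{F}$; a closing (a-SUB.ii) using the $\cc{c}$-assumption will reinstate the $\lambda$-term under the $\Sigma$ and give the stated goal.

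Next I apply (RA) to split on $\cc{o}{\,:^{o}\,}\acqb{T}$ versus $\cc{o}{\,:^{o}\,}\acqb{F}$. In the $\acqb{F}$-branch the match-chaining idiom from the author's SPR2 proof closes things immediately: (TM) supplies $\cc{o}{\,:^{o}\,}\cc{o}$; ($\beta$-EXP) with body $\lambda\cc{o}.\cc{o}$ yields $[\lambda\cc{o}.\cc{o}](\cc{o}){\,:^{o}\,}\acqb{F}$; (a-SUB) swaps the inner $\cc{o}$ for $\acqs{\Sigma}^\iota(\cc{c})$ thanks to $\acqs{\Sigma}^\iota(\cc{c}){\,:^{o}\,}\cc{o}$; and ($\beta$-CON) returns $\acqs{\Sigma}^\iota(\cc{c}){\,:^{o}\,}\acqb{F}$. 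The real work is in the $\acqb{T}$-branch. The same idiom first produces $\acqs{\Sigma}^\iota(\cc{c}){\,:^{o}\,}\acqb{T}$, and (a-SUB.ii) with the $\cc{c}$-assumption restores $\acqs{\Sigma}^\iota(\lambda\cc{x}.\acqs{=}^\iota(\cc{x},\cc{D}_\cc{w})){\,:^{o}\,}\acqb{T}$. I then apply ($\Sigma$-E), which introduces a fresh witness $\acqb{a}$ with $[\lambda\cc{x}.\acqs{=}^\iota(\cc{x},\cc{D}_\cc{w})](\acqb{a}){\,:^{o}\,}\acqb{T}$; ($\beta$-CON) rewrites this to $\acqs{=}^\iota(\acqb{a},\cc{D}_\cc{w}){\,:^{o}\,}\acqb{T}$. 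Since this application is matched with the proper acquisition $\acqb{T}$, it is $v$-proper, so (a-INST) lets me assume each subconstruction is proper---in particular $\cc{D}_\cc{w}{\,:^{\iota}\,}\cc{d}$ for a fresh $\cc{d}$. That match is patently incompatible with the premise $\cc{D}_\cc{w}{\,:^{\iota}\,}\acqs{\bot}$, so (EFQ) discharges the branch and (RA) finishes the proof.

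The main obstacle is pure bookkeeping rather than insight: three separate uses of the $[\lambda\cc{o}.\cc{o}]$-chaining trick (one in each (RA)-branch and one further inside the ($\Sigma$-E) subproof), together with the interleaving of ($\lambda$-INST), ($\Sigma$-INST), ($\Sigma$-E) and (a-INST), demand strict freshness discipline on the auxiliary variables $\cc{c}, \cc{o}, \acqb{a}, \cc{d}$ and a fair amount of (WR) to lift every intermediate match into the extended context where (EFQ) can finally fire.
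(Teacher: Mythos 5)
Your proposal is correct and follows essentially the same route as the paper's own proof: instantiate the truth value of the existence claim by a variable via ($\Sigma$-INST), split with (RA), refute the $\acqb{T}$-branch by transferring $\acqb{T}$ to the $\Sigma$-application with the $[\lambda \cc{o}.\cc{o}]$-chaining idiom and then using ($\Sigma$-E) plus the witness to force $\cc{D}_\cc{w}$ to be $v$-proper, contradicting the premiss via (EFQ), and close the $\acqb{F}$-branch by the same chaining idiom. Your extra naming of the class by ($\lambda$-INST) and your use of (a-INST) in place of the paper's ($=$-E) to extract the propriety of $\cc{D}_\cc{w}$ are only bookkeeping variations (the latter conveniently accommodating the argument order $\acqs{=}^\iota(\cc{x},\cc{D}_\cc{w})$ in the stated conclusion), not a different argument.
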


\begin{proof}
The presentation of the proof is split in three pieces. First, auxiliary matches are stated:
% (definitions are not employed in succedents of sequents):

\vspace{3pt}

\begin{center}
\begin{minipage}{.35\textwidth}
$\mathtt{M}_1 :=
\acqs{\Sigma}^\tau (\lambda \cc{x} . \acqs{=}^\iota (\cc{D}_\cc{w}, \cc{x})) {\; :^o\;} \cc{o}$
\end{minipage}
\begin{minipage}{.2\textwidth}
$\mathtt{M}_2 := \cc{o} {\; :^o\;} \acqb{T} $
\end{minipage}
\begin{minipage}{.4\textwidth}
$\mathtt{M}_3 :=
[\lambda \cc{x}  . \acqs{=}^\iota (\cc{D}_\cc{w}, \cc{x})] (\cc{x}) {\; :^o\;} \acqb{T}$
\end{minipage}
\end{center}

\noindent
Derivation $\mathtt{D}_1$. We begin with the assumption \textit{per absurdum} that it is true that an individual $\mathtt{y}$ belongs to the (one-membered) set of individuals who are the reference of ``$\cc{D}$'' in $\mathtt{w}$ (cf. $\mathtt{M}_3$). This will suggest that the truth value of the 
relevant existence ascription (cf. $\mathtt{M}_1$) is $\mathtt{F}$.

%\vspace{-20pt}
\begin{center}
\begin{prooftree}

\AxiomC{$\Gamma \longrightarrow
\cc{D}_\cc{w} {\; :^\iota\;} \acqs{\bot}
$}
\RightLabel{(WR)}
\UnaryInfC{$\Gamma , \mathtt{M}_3 \longrightarrow
\cc{D}_\cc{w} {\; :^\iota\;} \acqs{\bot} $}

%PRAVA VETEV
\AxiomC{}
\RightLabel{(AX)}
\UnaryInfC{$\Gamma , \mathtt{M}_3
\longrightarrow 
[\lambda \cc{x}  . \acqs{=}^\iota (\cc{D}_\cc{w}, \cc{x})] (\cc{x}) {\; :^o\;} \acqb{T}$}
\RightLabel{($\beta$-CON)}
\UnaryInfC{$\Gamma , \mathtt{M}_3
\longrightarrow 
\acqs{=}^\iota (\cc{D}_\cc{w}, \cc{x}) {\; :^o\;} \acqb{T}$}
\RightLabel{($=$-E)}
\UnaryInfC{$\Gamma , \mathtt{M}_3
\longrightarrow 
\cc{D}_\cc{w} {\; :^\iota\;} \cc{x}$}
%end PRAVA VETEV

\RightLabel{(EFQ)}
\BinaryInfC{$\Gamma , \mathtt{M}_3 \longrightarrow 
\cc{o} {\; :^o\;} \acqb{F}$}

\RightLabel{(WR)}
\UnaryInfC{$\Gamma , \mathtt{M}_1 , \mathtt{M}_2 , \mathtt{M}_3 \longrightarrow 
\cc{o} {\; :^o\;} \acqb{F}$}

\end{prooftree}
\end{center}

\noindent
Derivation $\mathtt{D}_2$. 
Now we elaborate the redundant assumption (below, we'll therefore use (RA)) that the truth value of the relevant existence ascription is $\mathtt{T}$ (cf. $\mathtt{M}_2$ and the left middle branch).

\vspace{-10pt}
\begin{center}
\begin{prooftree}
\small

\AxiomC{$\mathtt{D}_1$}

% HLAVNI STREDNI
\AxiomC{}
\RightLabel{(AX)}
\UnaryInfC{$\Gamma , \mathtt{M}_2 \longrightarrow
\cc{o} {\; :^o\;} \acqb{T}$}

\AxiomC{}
\RightLabel{(TM)}
\UnaryInfC{$\Gamma  \longrightarrow 
\cc{o} {\; :^o\;} \cc{o} $}
\RightLabel{(WR)}
\UnaryInfC{$\Gamma , \mathtt{M}_2 \longrightarrow 
\cc{o} {\; :^o\;} \cc{o}  $}

\RightLabel{($\beta$-EXP)\!\!}
\BinaryInfC{$\Gamma , \mathtt{M}_2 \longrightarrow 
[\lambda \cc{o} . \cc{o}] (\cc{o}) {\; :^o\;} \acqb{T} $}

\RightLabel{(WR)\!\!}
\UnaryInfC{$\Gamma , \mathtt{M}_1 , \mathtt{M}_2 \longrightarrow
[\lambda \cc{o} . \cc{o}] (\cc{o}) {\; :^o\;} \acqb{T} $}
%end HLAVNI STREDNI

% STREDNI
\AxiomC{}
\RightLabel{(AX)}
\UnaryInfC{$\Gamma , \mathtt{M}_1 \longrightarrow
\acqs{\Sigma}^\iota (\lambda \cc{x} . \acqs{=}^\iota (\cc{D}_\cc{w}, \cc{x})) {\; :^o\;} \cc{o}$}
\RightLabel{(WR)}
\UnaryInfC{$\Gamma , \mathtt{M}_1 , \mathtt{M}_2 \longrightarrow
\acqs{\Sigma}^\iota (\lambda \cc{x} . \acqs{=}^\iota (\cc{D}_\cc{w}, \cc{x})) {\; :^o\;} \cc{o}$}

\RightLabel{(a-SUB)}
\BinaryInfC{$\Gamma , \mathtt{M}_1 , \mathtt{M}_2 \longrightarrow
[\lambda \cc{o} . \cc{o} ] (\acqs{\Sigma}^\iota (\lambda \cc{x} . \acqs{=}^\iota (\cc{D}_\cc{w}, \cc{x})))
 {\; :^o\;} \acqb{T}$}

\RightLabel{($\beta$-CON)}
\UnaryInfC{$\Gamma , \mathtt{M}_1 , \mathtt{M}_2 \longrightarrow
\acqs{\Sigma}^\iota (\lambda \cc{x} . \acqs{=}^\iota (\cc{D}_\cc{w}, \cc{x})) {\; :^o\;} \acqb{T}$}

% end STREDNI

\RightLabel{($\Sigma$-E)}
\BinaryInfC{$\Gamma ,\mathtt{M}_1 , \mathtt{M}_2 \longrightarrow 
\cc{o} {\; :^o\;} \acqb{F}$}

\end{prooftree}
\end{center}

\noindent
Finally, we put the truth value $\mathtt{F}$ with the existence ascription together (first, we eliminate $\mathtt{M}_2$, cf. left).

\vspace{-10pt}
\begin{center}
\begin{prooftree}
\small

\AxiomC{$\mathtt{D}_2$\!\!\!\!\!\!}

\AxiomC{}
\RightLabel{(AX)}
\UnaryInfC{$\Gamma , \cc{o} {\; :^o\;} \acqb{F}
\longrightarrow 
\cc{o} {\; :^o\;} \acqb{F}
$}
\RightLabel{(WR)}
\UnaryInfC{$\Gamma , \mathtt{M}_1 , 
\cc{o} {\; :^o\;} \acqb{F}
\longrightarrow 
\cc{o} {\; :^o\;} \acqb{F}
$}

\RightLabel{(RA)}
\BinaryInfC{$\Gamma , \mathtt{M}_1 
\longrightarrow 
\cc{o} {\; :^o\;} \acqb{F}
$}

\AxiomC{}
\RightLabel{(TM)}
\UnaryInfC{$\Gamma \longrightarrow
\cc{o} {\; :^o\;} \cc{o}$}
\RightLabel{(WR)}
\UnaryInfC{$\Gamma , \mathtt{M}_1 \longrightarrow
\cc{o} {\; :^o\;} \cc{o}$}

\RightLabel{($\beta$-EXP)\!\!\!\!}
\BinaryInfC{$\Gamma  , \mathtt{M}_1 \longrightarrow 
[\lambda \cc{o}. \cc{o}]
(\cc{o}) {\; :^o\;} \acqb{F}$}

\AxiomC{}
\RightLabel{(AX)}
\UnaryInfC{$\Gamma , \mathtt{M}_1 \longrightarrow \acqs{\Sigma}^\iota (\lambda \cc{x} . \acqs{=}^\iota (\cc{D}_\cc{w}, \cc{x})) {\; :^o\;} \cc{o}$}

\RightLabel{(a-SUB)}
\BinaryInfC{$\Gamma , \mathtt{M}_1 \longrightarrow 
[\lambda \cc{o}. \cc{o}] (
\acqs{\Sigma}^\iota  (\lambda \cc{x} . \acqs{=}^\iota ( \cc{x} , \cc{D}_\cc{w}) ) )  {\; :^o\;} \acqb{F}$}

\RightLabel{($\beta$-CON)}
\UnaryInfC{$\Gamma , \mathtt{M}_1 \longrightarrow 
\acqs{\Sigma}^\iota  (\lambda \cc{x} . \acqs{=}^\iota ( \cc{x} , \cc{D}_\cc{w}) ) {\; :^o\;} \acqb{F}$}

\RightLabel{($\Sigma$-INST)}
\UnaryInfC{$\Gamma \longrightarrow 
\acqs{\Sigma}^\iota  (\lambda \cc{x} . \acqs{=}^\iota ( \cc{x} , \cc{D}_\cc{w}) ) {\; :^o\;} \acqb{F}$}

\end{prooftree}
\end{center}

\end{proof}

\vspace{-10pt}
\begin{proof}[Proof of (SPR3)]
$\newline$
\vspace{-30pt}

\begin{prooftree}
\AxiomC{$\Gamma \longrightarrow 
\cc{F} (\cc{D}_\cc{w}) {\; :^o\;} \acqs{\bot}$}

\AxiomC{$\Gamma \longrightarrow 
\cc{F} (\cc{x}) {\; :^o\;} \cc{o}$}

\RightLabel{(L-Desc$^\bot$-APP)}
\BinaryInfC{$\Gamma \longrightarrow
\cc{D}_\cc{w} {\; :^\iota\;} \acqs{\bot}
$}

\RightLabel{(L-$\Sigma$.Desc$^\bot$-APP)}
\UnaryInfC{$\Gamma \longrightarrow 
\acqs{\Sigma}^\iota  (\lambda \cc{x} . \acqs{=}^\iota ( \cc{x} , \cc{D}_\cc{w}) ) {\; :^o\;} \acqb{F}$}
\end{prooftree}

\end{proof}

\vspace{-15pt}
\section{Conclusion}

We exposed a specific theory of definite descriptions in Tich\'{y}an spirit whose essential features were listed in Sec. 1. The derivation rules of the partial type theory $\mathsf{TT^*}$ that govern the $\rotatediota$-operator were exposed and briefly discussed in Sec. 3. In Sec. 4, we showed its application in natural language processing, in particular
to two famous cases of reasoning: 
(a) 
the case with intensional transitives whose complements are non-referring descriptions, and 
(b) 
the case of Strawsonian rules for existential presuppositions concerning non-referring descriptions -- which have not been studied in a formal way in literature.
Future work should focus more on 
(i) proof-theoretic properties of the above $\rotatediota$-rules and (ii) comparison with rival logical approaches both in free and modal logic (cf. \cite{indrzejczak2023,indrzejczak-zawidzki2023,orlandelli2021}).

\medskip

\noindent
\textit{Acknowledgment.}
The present author thanks 
to reviewers for many helpful suggestions and 
to Petr Kuchyňka for useful remarks and essentially his proof of (SPR2).

%\newpage %%%%%%%%%%%%%%%%%%%%%%%

\nocite{*}

\bibliographystyle{eptcs}
\bibliography{generic}

\end{document}